\documentclass[11pt]{article}

\usepackage[utf8]{inputenc}
\usepackage[T1]{fontenc}
\usepackage[margin=1.05in]{geometry}
\usepackage{amsmath,amssymb,amsthm,mathtools,mathrsfs}
\usepackage{bm}
\usepackage[protrusion=true,expansion=false]{microtype}
\usepackage{enumitem}
\usepackage{booktabs}
\usepackage{array}
\usepackage[round]{natbib}
\usepackage{xcolor}
\usepackage[colorlinks=true,linkcolor=blue!55!black,citecolor=blue!50!black,urlcolor=blue!55!black]{hyperref}
\newtheorem{theorem}{Theorem}[section]
\newtheorem{proposition}[theorem]{Proposition}

\newtheorem{corollary}[theorem]{Corollary}
\theoremstyle{definition}
\newtheorem{definition}[theorem]{Definition}
\newtheorem{remark}[theorem]{Remark}

\newtheorem{observation}[theorem]{Observation}

\newcommand{\R}{\mathbb{R}}

\newcommand{\E}{\mathbb{E}}
\newcommand{\Prob}{\mathbb{P}}
\newcommand{\Q}{\mathbb{Q}}
\newcommand{\Var}{\operatorname{Var}}

\newcommand{\Tr}{\operatorname{Tr}}
\newcommand{\F}{\mathcal{F}}

\newcommand{\dd}{\,\mathrm{d}}
\newcommand{\abs}[1]{\left\lvert#1\right\rvert}
\newcommand{\norm}[1]{\left\lVert#1\right\rVert}

\newcommand{\Filt}{(\F_t)_{t\in[0,T]}}
\newcommand{\G}{\mathcal{G}}

\DeclareMathOperator*{\argmax}{arg\,max}
\newcommand{\SR}{\mathrm{SR}}

\title{\bfseries Markets Are Not Random, They Are Hard to Predict\\[2pt]
\large Instrumental Probability, the $\Prob$--$\Q$ Wedge, and the Limits of Learnable Alpha}

\author{%
Miquel Noguer i Alonso\\[2pt]
\normalsize Artificial Intelligence Finance Institute}

\date{\today}

\begin{document}
\maketitle

\begin{abstract}
\noindent
Financial returns are often called ``random,'' but the word conflates ontic chance, epistemic ignorance, strategic
feedback, and model instability. This essay argues that financial markets are not random in the ontic sense in which
a quantum measurement is random. They are causal economic systems whose future is hard to predict because relevant
causes are hidden, costly to observe, strategically used, capacity constrained, and sometimes governed by a changing
law. The formal language of finance already encodes this distinction. Prices live on filtered probability spaces
because agents have partial information; derivatives are priced under a risk-neutral measure $\Q\ne\Prob$ because
pricing is an instrumental change of measure rather than a statement about the real data-generating law; and
no-arbitrage gives martingality under an equivalent pricing measure, not full predictability failure under every
real-world information set. The paper separates no-arbitrage, informational efficiency, and net exploitability;
uses the Doob decomposition to isolate risk-compensated predictable drift from martingale innovation; adds a
capacity-and-survival layer explaining why positive signals need not be scalable; relates the $\Prob$--$\Q$ wedge to
stochastic-discount-factor geometry and relative entropy; formalises filtration sufficiency, model-selection
landscape risk, and intervention-stable causality; and connects reflexivity, microstructure, and Knightian ambiguity
to a unified entropy ledger. The disciplined thesis is therefore not that markets are unknowable, nor that they are
literally random. Markets are hard to predict, and hardest exactly where prediction is costly, competitive,
self-defeating, capacity limited, or invalidated by regime change.
\end{abstract}

\medskip
\noindent\textbf{Keywords:} market efficiency; random walk; martingale; fundamental theorem of asset pricing;
risk-neutral measure; stochastic discount factor; reflexivity; adaptive markets hypothesis; Knightian uncertainty;
market microstructure; transaction costs; optimal execution; entropy; relative entropy; predictability; backtest
overfitting; multiple testing; deflated Sharpe ratio.

\newpage

\tableofcontents

% ======================================================================

\section{Introduction}
\label{sec:intro}

It is common, and sloppy, to call asset returns ``random.'' The looseness matters because it imports a metaphysics
that finance does not possess. In a companion essay we separated two notions that ordinary language conflates: a
phenomenon is \emph{ontically random} if its outcome is fixed by \emph{no} complete specification of the present,
whereas it is merely \emph{epistemically unpredictable} if its outcome is determined by the present and the laws of
motion but inaccessible to any feasible forecaster. Quantum measurement is the paradigmatic case often invoked for
the first category; more carefully, Bell's theorem rules out local hidden-variable accounts of the relevant
correlations \citep{bell1964}. This essay does not need a full interpretation of quantum mechanics. It needs only the
weaker contrast: financial prices are macroscopic economic objects generated by information, orders, balance sheets,
constraints, and institutions.

Financial markets are epistemically difficult, not ontically random. Every tick has a cause; the chain ``information
$\to$ order $\to$ execution $\to$ price'' is causal throughout, and whatever microscopic quantum indeterminism exists
in the physical substrate of traders and machines is washed out, for modelling purposes, by decoherence and
macroscopic aggregation long before it reaches a quote \citep{zurek2003}. The Born-rule sense of chance is simply the
wrong category. But the alternative is not ``easy predictability.'' If markets are epistemic, they are \emph{hard to
predict} rather than literally unpredictable -- the word the title chooses -- and \S\ref{sec:hardness} draws the line
with care, including why one channel (reflexivity) tempts the stronger word without earning it.

This essay defends and sharpens that claim along five axes.

\begin{enumerate}[label=\textup{(\arabic*)},leftmargin=2.2em]
\item \textbf{Probability is instrumental, not ontic} (\S\ref{sec:pq}). The defining practice of mathematical
finance -- pricing under an equivalent martingale measure $\Q\ne\Prob$ -- is itself the proof. A probability one
\emph{replaces} for the purpose of pricing is, by construction, not a literal description of the world.
\item \textbf{Martingales are weaker than random walks} (\S\ref{sec:martingales-not-random}). No-arbitrage implies a
conditional-mean restriction under a pricing measure; it does not imply independent increments, Gaussian returns,
constant volatility, or absence of causal structure.
\item \textbf{Market hardness has several sources} (\S\S\ref{sec:filtration-causality}--\ref{sec:knight}).
Beyond high-dimensional partial observation, markets add costly learnability, reflexivity, microstructure frictions,
capacity limits, and Knightian law instability.
\item \textbf{The claim must not overshoot} (\S\ref{sec:doob}). Markets are not \emph{perfectly} unpredictable. The
Doob decomposition isolates a predictable, risk-compensated component (the drift, the premia) from an unpredictable
martingale innovation. Only the \emph{arbitrageable} part is unpredictable by construction.
\item \textbf{Unpredictability has a measure: entropy} (\S\S\ref{sec:entropy}--\ref{sec:relent}). Conditional entropy
quantifies residual uncertainty and mutual information quantifies informational content; relative entropy turns the
$\Prob$--$\Q$ wedge of~(1) into a number, $\tfrac12\theta^2T$; and -- via the theorems of Pesin and Brudno -- the
chaotic, algorithmic, and thermodynamic notions of unpredictability of the companion essay are revealed to be this
same quantity.
\end{enumerate}

\paragraph{Contribution.} The individual results invoked below are classical -- the fundamental theorem of asset
pricing, Girsanov's change of measure, the Doob decomposition, the Grossman--Stiglitz and adaptive-markets accounts
of efficiency, the Kyle and Glosten--Milgrom microstructure models, and the Shannon/Kullback--Leibler apparatus. The
contribution here is not a new theorem but a \emph{unification}: the same distinction -- conditional-mean
unpredictability versus structured higher-moment and regime dependence -- runs through all of them, and the pieces
collect into a single \emph{entropy ledger} (\S\ref{sec:synthesis}) in which efficiency, the price of risk,
ambiguity, and modelling under ignorance appear as one family of information-theoretic quantities. The aim is
conceptual clarity and a disciplined vocabulary, not new empirical estimates.

Section~\ref{sec:hardness} separates genuine unpredictability from mere difficulty and fixes the sense in which the
title's word is meant. Section~\ref{sec:setup} fixes the probabilistic apparatus and the no-arbitrage notion.
Section~\ref{sec:ftap} states the fundamental theorem of asset pricing, and Section~\ref{sec:martingales-not-random}
prevents the common mistake of equating martingales with independent Gaussian random walks. Section~\ref{sec:noarb-efficiency}
separates no-arbitrage from full informational efficiency. Section~\ref{sec:tradability} formalises costs, capacity,
and survival. Sections~\ref{sec:pq}--\ref{sec:statistical-hardness} separate the real-world and pricing measures and
show why physical drift is statistically faint. Section~\ref{sec:doob} decomposes returns into drift and innovation.
Section~\ref{sec:filtration-causality} makes prediction explicitly relative to filtrations, model sufficiency, and
causal intervention. Sections~\ref{sec:reflexive} and \ref{sec:microstructure} explain the market-specific feedback
and order-book mechanisms that destroy scalable alpha.
Section~\ref{sec:knight} treats law instability and ambiguity. Sections~\ref{sec:entropy}--\ref{sec:relent} recast
the argument in information-theoretic terms, while Section~\ref{sec:empirical-protocol} gives a falsifiable empirical
protocol. Section~\ref{sec:synthesis} synthesises the taxonomy and \S\ref{sec:conclusion} concludes.

% ======================================================================

\section{Unpredictable versus hard to predict}
\label{sec:hardness}

A word of precision is owed by the title. \emph{Unpredictable}, used strictly, is an ontic property: an event is
unpredictable when its outcome is fixed by \emph{no} complete specification of the present, so that no agent -- with
unlimited resources, the full body of law, and access to every variable -- can do better than chance. In the narrow
sense used here, markets do not qualify. The strongest candidate for literal unpredictability is the quantum case on
an orthodox reading; Bell's theorem rules out local hidden-variable completions of the observed correlations, but the
present essay does not rest on a particular interpretation of quantum mechanics. Its finance claim is weaker and more
robust: prices are generated by macroscopic causal mechanisms, and the relevant limits are informational,
strategic, statistical, and institutional.

The clean diagnostic is conditional information (formally, \S\ref{sec:entropy}). If a future return $r_{t+h}$ carries
no information from a filtration $\F_t$, then $I(r_{t+h};\F_t)=0$. If information exists but is small, costly,
unstable, or destroyed by use, the problem is not ontic randomness but hardness. Markets visibly have such
information: volatility is forecastable, premia exist, order flow is structured, liquidity is state-dependent, and
private information matters. What competition tends to remove is much narrower: scalable, risk-adjusted,
net-of-cost conditional mean alpha.

The hardness comes in four textures, which the body of the paper makes precise.

\begin{enumerate}[label=\textup{(\alph*)},leftmargin=2.2em]
\item \emph{Statistically hard.} The risk premium is forecastable but buried in noise; distinguishing $\Prob$ from
$\Q$ -- estimating expected returns -- requires horizons of order $\theta^{-2}$, often decades of data
(\S\ref{sec:relent}; \citealp{merton1980}). Low signal-to-noise is already enough to defeat many apparently obvious
signals.
\item \emph{Learnability hard.} Even when an oracle conditional mean exists, a finite-sample learner must estimate it
from dependent, non-stationary, multiply searched data. The signal may decay before its estimation error falls below
its economic value (\S\ref{sec:learnability}).
\item \emph{Endogenously hard.} A market's difficulty is not merely exogenous, like a Lyapunov horizon in chaos. Any
edge you find and trade changes prices, impact, capacity, and the incentives of others (\S\ref{sec:reflexive}). The
unpredictability is manufactured and held near the break-even margin by the predictors themselves.
\item \emph{Not probabilizable.} Under regime change the target law itself is not fixed (\S\ref{sec:knight}); here
``hard to predict'' already presupposes a stable distribution to pin down -- exactly what Knightian uncertainty
denies.
\end{enumerate}

That ``unpredictable'' cannot be literal is guaranteed by the informational equilibrium itself. Were the arbitraged
residual truly unpredictable \emph{to everyone}, no one would pay to gather information, and then -- by
\citet{grossmanstiglitz1980} -- prices could not be efficient. Efficiency requires that someone can predict something
well enough to pay for price discovery. Unpredictability in a market is therefore always relative to an information
set and net of costs: unpredictable to the marginal public trader, predictable to the owner of a private signal, a
faster pipe, a better model, or a risk tolerance others do not possess.

We therefore choose the precise word. Markets are \emph{hard to predict}, measurably so, in the textures above. One
texture comes close to deserving the stronger word -- self-defeating prediction at the competitive margin, which
mimics a wall without being one -- and ``close'' is exactly why ``unpredictable'' is a tempting overstatement. The
honest long form of the title is this: \emph{markets are not random; they are hard to predict, and at the margin
where prediction matters most, hard in a way that defeats the act of prediction.}

% ======================================================================
\section{The apparatus: information, strategies, arbitrage}
\label{sec:setup}

Fix a finite horizon $T$ and a filtered probability space $(\Omega,\F,\Filt,\Prob)$ satisfying the usual conditions.
The filtration $\Filt$ \emph{is} the information structure: $\F_t$ is what is known at time $t$, and a process is
\emph{adapted} if it is observable as it unfolds, \emph{predictable} if its time-$t$ value is known strictly before
$t$ (i.e.\ $\F_{t^-}$-measurable). A predictable process is the formal object of a trading rule: a position must be
chosen on the basis of past information, not contemporaneous surprise.

Let $S=(S_t)_{t\in[0,T]}$ be an $\R^d$-valued semimartingale of discounted asset prices (numéraire $B_t$, the
money-market account). A \emph{self-financing strategy} is a predictable, $S$-integrable process $H$ with discounted
wealth
\begin{equation}\label{eq:wealth}
V_t(H)\;=\;V_0+\int_0^t H_u\dd S_u ,
\end{equation}
the stochastic integral being the cumulative gains from trade. The two ingredients of the apparatus -- that
strategies are \emph{predictable} and that gains are an \emph{integral against the price} -- already encode the
asymmetry between what can and cannot be forecast: $H$ uses the past, $\dd S$ delivers the surprise.

\begin{definition}[Arbitrage and NFLVR]\label{def:arb}
An \emph{arbitrage} is a self-financing $H$ with $V_0=0$, $V_T(H)\ge0$ a.s., and $\Prob(V_T(H)>0)>0$: a risk-free
profit from nothing. The market satisfies \emph{No Free Lunch with Vanishing Risk} \citep{delbaenschachermayer1994}
if no sequence of admissible terminal gains converges uniformly to a nonnegative, nontrivial payoff -- the correct
closure of ``no arbitrage'' in continuous time.
\end{definition}

The first source of unpredictability appears already here and is the one finance \emph{shares} with statistical
physics. A market aggregates an enormous number of coupled, partially observed degrees of freedom -- positions,
beliefs, orders, inventories -- none of which any single agent can measure. As in Boltzmann's treatment of a gas,
the rational response is not to integrate the (inaccessible) microscopic equations of motion but to model the
aggregate probabilistically. This is genuine \emph{epistemic} ignorance, identical in kind to the classical
randomness of a many-body system, and it is the least market-specific of the three sources.

% ======================================================================
\section{No exploitable structure is a martingale}
\label{sec:ftap}

The economic content of ``you cannot systematically beat the market without bearing risk'' is, mathematically, a
\emph{martingale} statement. This is the bridge from informal efficiency to a precise object.

\begin{theorem}[Fundamental Theorem of Asset Pricing; \citealp{harrisonkreps1979,harrisonpliska1981,dalang1990,delbaenschachermayer1994}]
\label{thm:ftap}
Let $S$ be a locally bounded semimartingale. Then NFLVR holds if and only if there exists a probability measure
$\Q\sim\Prob$ (an \emph{equivalent local martingale measure}) under which $S$ is a local martingale. For general
semimartingales, ``local martingale'' is replaced by ``$\sigma$-martingale.''
\end{theorem}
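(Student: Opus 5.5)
The plan has two directions of unequal difficulty.

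\emph{Easy direction: existence of $\Q$ implies NFLVR.}
Suppose $\Q\sim\Prob$ makes $S$ a local martingale. For a predictable, $S$-integrable, admissible $H$ (so $V_t(H)\ge -a$ for some constant $a$), the gains process $\int_0^t H_u\dd S_u$ is a local martingale under $\Q$ that is bounded below. By Fatou's lemma it is a $\Q$-supermartingale, hence $\E_\Q[V_T(H)]\le V_0=0$.

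Now take a sequence of admissible terminal gains $f_n$ with $f_n\to f\ge0$ uniformly, so the negative parts of the $f_n$ vanish. Fatou gives $\E_\Q[f]\le\liminf\E_\Q[f_n]\le 0$, so $f=0$ $\Q$-a.s. Equivalence of $\Q$ and $\Prob$ then gives $f=0$ $\Prob$-a.s., which is NFLVR.

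In the general-semimartingale case the same argument goes through once we invoke the Ansel--Stricker lemma: a stochastic integral against a $\sigma$-martingale that is bounded below is a local martingale, and hence a supermartingale.

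\emph{Hard direction: NFLVR implies existence of $\Q$.}
I would follow Delbaen--Schachermayer. Let
\[
K=\{V_T(H):H\text{ admissible},\,V_0=0\}
\quad\text{and}\quad
C=(K-L^0_+)\cap L^\infty .
\]
The steps, in order, are these.
\begin{enumerate}[label=\textup{(\roman*)},leftmargin=2.2em]
\item Show that NFLVR is equivalent to $\overline{C}\cap L^\infty_+=\{0\}$, with the closure taken in the norm topology of $L^\infty$. This is essentially the definition.
\item Show that $C$ is weak-$*$ closed in $L^\infty$. By the Krein--\v{S}mulian theorem it suffices to check that bounded a.s.-convergent sequences in $C$ have their limits in $C$.
\item Apply Kreps--Yan separation. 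For each nonzero $g\in L^\infty_+$, Hahn--Banach gives a functional in $L^1_+$ that is strictly positive on $g$ and nonpositive on $C$. An exhaustion argument (a countable convex combination of these functionals) then produces a density $Z>0$ a.s. with $\E[Zf]\le0$ for all $f\in C$. Set $\dd\Q=Z\dd\Prob$.
\item Take $H=\pm\mathbf 1_{(s,t]\times A}$ stopped at localizing times that make $S$ bounded; this is possible because $S$ is locally bounded. These choices give $\E_\Q[\mathbf 1_A(S_{t\wedge\tau}-S_{s\wedge\tau})]=0$, so $S$ is a local $\Q$-martingale.
\end{enumerate}

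\emph{Main obstacle.}
The crux is step (ii), the closedness of $C$. I would first use NFLVR to show that the set of maximal elements of $K$ is bounded in $L^0$. Then, given a sequence $f_n=V_T(H^n)$, I would pass to forward convex combinations using Koml\'os-type lemmas so that the terminal values converge a.s. Finally I would control the integrands themselves, showing that the corresponding convex combinations of the $H^n$ converge in the semimartingale topology to some integrand $H$. This needs a Memin-type closedness of the space of stochastic integrals, together with a uniform lower bound on the wealth processes that comes from admissibility and NFLVR, used to rule out blow-up.

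\emph{Extension to $\sigma$-martingales.}
For general $S$, I would reduce to the locally bounded case. Choose a predictable $\varphi>0$ such that $\varphi\cdot S$ has bounded jumps, and hence is locally bounded, and note that it generates the same integrals. Apply the result above to $\varphi\cdot S$. Then $S=\varphi^{-1}\cdot(\varphi\cdot S)$ is a $\sigma$-martingale under the resulting $\Q$.

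The discrete-time (Dalang--Morton--Willinger) and finite-$\Omega$ (Harrison--Pliska) versions arise as special cases. In those settings step (ii) follows from a measurable-selection argument or from finite-dimensional closedness.
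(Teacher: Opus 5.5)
The paper does not prove this theorem; it only quotes it. Your locally bounded argument follows the Delbaen--Schachermayer route in the cited work. The genuine gap is the extension to general semimartingales.

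A predictable $\varphi>0$ making $\varphi\cdot S$ have bounded jumps need not exist, because $\varphi$ is fixed strictly before a jump and cannot react to its size. Take $S_t=\xi\mathbf 1_{\{t\ge1\}}$ with $\xi\sim N(0,1)$, $\F_t$ trivial for $t<1$ and $\F_t=\sigma(\xi)$ for $t\ge1$. Every predictable $\varphi>0$ has $\varphi_1$ equal to a positive constant. So $\varphi\cdot S=\varphi_1\xi\mathbf 1_{\{t\ge1\}}$ is never locally bounded, and no equivalent change of measure helps, since local boundedness depends on $\Prob$ only through its null sets.

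Nor can you run steps (ii)--(iv) on $S$ directly. In this example every admissible integrand has $H_1=0$, so $K=\{0\}$, $C=-L^\infty_+$, and NFLVR holds. Then every $\Q\sim\Prob$ is a Kreps--Yan separating measure. That includes any $\Q$ under which $\xi$ is Cauchy-distributed, and under such a $\Q$ the process $S$ is not a $\sigma$-martingale, since that would require $\E^{\Q}\abs{\xi}<\infty$ and $\E^{\Q}[\xi]=0$. Step (iv) fails because $\pm\mathbf 1_A\mathbf 1_{(s,t]}$ is not admissible when $S$ cannot be localized to a bounded process.

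The missing ingredients are the content of Delbaen and Schachermayer's separate 1998 paper on unbounded processes. First, closedness of $C$ must be proved without localizing $S$ to a bounded process. Second, one needs the result that equivalent $\sigma$-martingale measures are dense among equivalent separating measures, so the Kreps--Yan measure can be perturbed into a $\sigma$-martingale measure.

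Three smaller points in the locally bounded sketch also need fixing.
\begin{enumerate}[label=\textup{(\alph*)},leftmargin=2.2em]
\item ``The set of maximal elements of $K$ is bounded in $L^0$'' is false as stated: that set is a cone, since $\lambda f$ is maximal whenever $f$ is. What NFLVR gives, and what your Koml\'os step needs, is $L^0$-boundedness of $K_1=\{V_T(H): H\text{ is }1\text{-admissible}\}$.
\item A.s.\ convergence of convex combinations of terminal gains gives no control of the paths, so you cannot pass directly to semimartingale-topology convergence of the integrands. The Delbaen--Schachermayer argument first moves to a maximal dominating limit, for which NFLVR forces the wealth processes themselves to converge uniformly in probability. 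Only then does it extract semimartingale-topology convergence and apply M\'emin's closedness theorem.
\item Ansel--Stricker is already needed in the locally bounded easy direction. An integral against a bounded martingale need not be a local martingale (\'Emery's example), and the lower bound from admissibility is exactly what repairs this.
\end{enumerate}
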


\begin{remark}
In discrete time the statement is exact and elementary \citep{dalang1990}: absence of arbitrage is equivalent to
the existence of an equivalent martingale measure, with no local/$\sigma$ qualification.
\end{remark}

Theorem~\ref{thm:ftap} is the rigorous form of the reflexive intuition to be developed in \S\ref{sec:reflexive}: if
discounted prices admit \emph{any} predictable, risk-uncorrelated drift that a strategy could harvest, that drift is
an arbitrage, and by the theorem no equivalent martingale measure exists -- the market is not viable. Conversely, in
a viable market the discounted price is, under $\Q$, a martingale: its increments are unforecastable in the precise
sense
\begin{equation}\label{eq:mart}
\E^{\Q}\!\left[\,S_t-S_s \mid \F_s\,\right]=0,\qquad s\le t .
\end{equation}
This is exactly \citet{samuelson1965}: \emph{properly anticipated prices fluctuate randomly}. The word ``randomly''
in Samuelson's title is, read correctly, the martingale property \eqref{eq:mart} -- a statement about
\emph{conditional unforecastability given the information set}, not about ontic chance. The efficient-market
hypothesis \citep{fama1970} is the empirical claim that real prices are close to \eqref{eq:mart} under the
risk-adjusted measure.

% ======================================================================
\section{Martingales are not random walks}
\label{sec:martingales-not-random}

The martingale conclusion of Theorem~\ref{thm:ftap} is often misread. A martingale is a conditional-mean object, not a
claim of independence, normality, homoskedasticity, or metaphysical randomness. It says that the next increment has
zero conditional expectation under the relevant measure and information set. It does not say that the full
conditional distribution is constant through time, that the tails are thin, that order flow has no structure, or that
future volatility cannot be forecast.

The distinction is elementary but decisive. Let
\begin{equation}\label{eq:mds-vol}
r_{t+1}=\sigma_t\epsilon_{t+1},\qquad
\E[\epsilon_{t+1}\mid\F_t]=0,
\qquad
\E[\epsilon_{t+1}^2\mid\F_t]=1,
\end{equation}
where $\sigma_t$ is positive and $\F_t$-measurable. Then $(r_{t+1})$ is a martingale-difference sequence because
$\E[r_{t+1}\mid\F_t]=0$. Yet
\begin{equation}\label{eq:vol-forecastable}
\Var(r_{t+1}\mid\F_t)=\sigma_t^2,
\end{equation}
so the scale of the next move is forecastable if $\sigma_t$ is known. ARCH and GARCH models are exactly of this kind:
the direction may be close to conditionally mean-zero while volatility is highly predictable
\citep{engle1982,bollerslev1986}. Heavy tails, volatility clustering, leverage effects, jumps, and microstructure
frictions all live comfortably inside a martingale framework.

Thus a random walk with independent increments is only a special martingale; variance-ratio tests reject the former
for many return series while remaining consistent with the latter \citep{lomackinlay1988,leroy1989}. A martingale with stochastic volatility,
time-varying jump intensity, endogenous liquidity, and conditional skewness is not a random walk in any plain-language
sense. It is a causally structured process whose first conditional moment has been neutralised by pricing,
competition, or change of measure:
\begin{equation}\label{eq:mean-not-law}
\E^{\Q}[\Delta S_{t+1}\mid\F_t]=0
\quad\not\Rightarrow\quad
\mathcal L^{\Q}(\Delta S_{t+1}\mid\F_t)
\text{ is constant, independent, or Gaussian.}
\end{equation}
The correct inference from no-arbitrage is not ``prices are random.'' It is that the easiest arbitrage channel -- the
predictable, scalable, risk-adjusted mean -- is closed under the relevant pricing representation.

% ======================================================================
\section{No-arbitrage is not the same as full efficiency}
\label{sec:noarb-efficiency}

A useful extension is also a warning. Theorem~\ref{thm:ftap} is sometimes read as if it said that markets are
unpredictable in the real world. It says something narrower and more powerful: in a viable market there exists at
least one equivalent pricing measure under which discounted prices are martingales. It does \emph{not} say that the
physical conditional mean under $\Prob$ is zero, that every investor has the same information, that transaction costs
are absent, or that every statistically detectable pattern is exploitable. Confusing these layers is the fastest way
to turn a correct no-arbitrage theorem into an exaggerated philosophy of randomness.

There are four distinct statements.
\begin{enumerate}[label=\textup{(\roman*)},leftmargin=2.2em]
\item \emph{No-arbitrage}: no self-financing predictable strategy creates a nonnegative terminal payoff from zero
capital with positive probability of gain. This is a viability condition.
\item \emph{Risk-neutral martingality}: discounted prices are local martingales under some $\Q\sim\Prob$. This is a
pricing representation.
\item \emph{Informational efficiency}: conditional expected abnormal returns are zero relative to a specified
information set and benchmark model. This is an empirical claim \citep{fama1970}.
\item \emph{Economic exploitability}: a signal survives financing costs, bid--ask spreads, market impact, borrow
fees, latency, taxes, capital constraints, model error, and capacity. This is an implementation claim.
\end{enumerate}
Only the first two are mathematical consequences of the idealised model. The last two are facts about a particular
market, date, information set, and trading technology.

A compact way to express the distinction is to write a net conditional alpha for a predictable strategy $H$ over a
small horizon $\Delta t$:
\begin{equation}\label{eq:netalpha}
\alpha^{\mathrm{net}}_t(H)
=
\E^{\Prob}\!\left[\Delta V_{t+\Delta t}(H)-r_t V_t(H)\Delta t\mid\F_t\right]
-
C_t(H)-I_t(H)-K_t(H),
\end{equation}
where $C_t(H)$ denotes explicit trading costs, $I_t(H)$ market impact, and $K_t(H)$ the shadow cost of capital,
funding, leverage, and risk limits. The martingale ideal is not that every statistical pattern has zero raw
conditional mean; it is that any positive, scalable, risk-adjusted, \emph{net} alpha is competed down until
\begin{equation}\label{eq:netalphazero}
\alpha^{\mathrm{net}}_t(H)\le 0
\end{equation}
for the marginal trader who is able to deploy it. Limits to arbitrage make this inequality local rather than
absolute: a signal may persist because it is too small, too crowded, too slow, too capital intensive, too risky, or
too costly to exploit at scale \citep{shleifervishny1997}.

This qualification strengthens the essay rather than weakening it. Financial predictability is not absent; it is
\emph{priced, rationed, and capacity constrained}. A premium can be predictable because it compensates risk. A
microstructure pattern can be predictable because only the fastest or best-capitalised traders can harvest it. A
long-horizon valuation signal can be predictable but untradable at attractive leverage because the drawdown horizon
is longer than the investor's survival horizon. None of these cases is ontic randomness. They are precisely the
terrain of epistemic, strategic, and institutional hardness.

\begin{remark}[The information-set relativity of efficiency]
The phrase ``the market is efficient'' is incomplete unless it specifies the information set. A process can be a
martingale relative to public daily information and fail to be a martingale relative to private order flow,
exchange-colocated messages, dealer inventory, or the latent state inferred by a superior model. Efficiency is always
a statement about $(S,\F,\Prob,\text{costs})$, not about $S$ alone.
\end{remark}

% ======================================================================
\section{From predictability to tradability: costs, capacity, and survival}
\label{sec:tradability}

The previous section separates raw predictability from economic exploitability. The distinction deserves its own
formal layer because many empirical disputes about ``market randomness'' are really disputes about this layer. A
signal may forecast returns, but the act of turning it into a position introduces spread costs, market impact,
financing, risk limits, and crowding. These frictions are not minor implementation details; they are part of the
mechanism by which a market converts public predictability into low net alpha.

Let $m_t=\E^{\Prob}[r_{t+1}\mid\G_t]$ be the conditional excess-return forecast available under an enlarged
filtration $\G_t$, and let $q_t$ be the signed dollar position chosen before $r_{t+1}$ is realised. A one-period
certainty-equivalent objective can be written as
\begin{equation}\label{eq:capacity-objective}
\Pi_t(q)
=
q\,m_t
- C_t(q)
- \frac{\gamma_t}{2}q^2\sigma_t^2
- F_t(q),
\end{equation}
where $C_t(q)$ is explicit and implicit trading cost, $\gamma_t q^2\sigma_t^2/2$ is the local risk-capital charge,
and $F_t(q)$ denotes funding, borrow, inventory, leverage, or mandate costs. A stylised cost function consistent
with execution models is
\begin{equation}\label{eq:cost-function}
C_t(q)
=
\frac{s_t}{2}|q|+\eta_t |q|^{1+\delta}+\lambda_t q^2,
\qquad \delta\in(0,1],
\end{equation}
where $s_t$ is the bid--ask spread, the middle term captures nonlinear temporary impact, and the quadratic term
captures inventory and permanent-impact effects. Linear-quadratic versions of this logic underlie optimal execution
models such as \citet{almgrenchriss2001}.

\begin{proposition}[Finite capacity of alpha]\label{prop:capacity}
Suppose $C_t$ and $F_t$ are convex, increasing in $|q|$, and not identically zero. A positive forecast $m_t>0$ can
produce positive marginal value at small size while having a finite optimal capacity $q_t^\star$. If $q_t^\star$ is
interior, it satisfies
\begin{equation}\label{eq:capacity-foc}
m_t
=
C_t'(q_t^\star)+\gamma_t q_t^\star\sigma_t^2+F_t'(q_t^\star).
\end{equation}
Thus the signal is not refuted by finite capacity: the predictable component is real, but its scalable value is
absorbed by impact, risk capital, and institutional constraints.
\end{proposition}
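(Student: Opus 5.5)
The proof treats $\Pi_t$ in \eqref{eq:capacity-objective} as a one-dimensional concave maximisation in $q$ and reads off the three claims from elementary convex analysis. Since $m_t>0$, it suffices to restrict to $q\ge0$: for $q<0$ the revenue term $qm_t$ is negative and every cost term is nonnegative, so any $q<0$ is dominated by $q=0$. Fixing $t$ and suppressing it, write
\begin{equation*}
\Pi(q)=qm-C(q)-\tfrac{\gamma}{2}q^2\sigma^2-F(q),\qquad q\ge0 .
\end{equation*}
Take $\gamma\sigma^2\ge0$, and normalise $C(0)=F(0)=0$, as is implicit in \eqref{eq:cost-function}. Then $\Pi$ is concave as a linear function minus convex ones.

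\textbf{Positive marginal value at small size.} This claim holds when the marginal cost at zero is below the forecast. That is,
\begin{equation*}
C'(0^+)+F'(0^+)<m ,
\end{equation*}
using one-sided derivatives, which exist by convexity. For the cost function \eqref{eq:cost-function} we get $C'(0^+)=s/2$, so the condition reads $m>s/2+F'(0^+)$. Under it, $\Pi'(0^+)>0$, so $\Pi(q)>0$ for small $q>0$. Note that the proposition says ``can produce''. It is therefore an existence claim, and exhibiting this parameter region is enough.

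\textbf{Finiteness of capacity.} This is the main obstacle, because convexity and monotonicity of $C$ and $F$ alone do not force $\Pi\to-\infty$. The case $\gamma\sigma^2>0$ is immediate: the quadratic charge dominates the linear revenue, so $\Pi$ is coercive. If $\gamma\sigma^2=0$, I would use a nondegeneracy assumption instead: \emph{eventually} the marginal cost exceeds $m$, i.e.
\begin{equation*}
\lim_{q\to\infty}\bigl(C'(q)+F'(q)\bigr)>m .
\end{equation*}
This holds for \eqref{eq:cost-function} whenever $\lambda>0$, or $\eta>0$ with $\delta>0$, since the marginal cost then grows without bound. I would state this explicitly as what ``not identically zero'' is meant to deliver. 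In either case $\Pi$ is continuous, concave and tends to $-\infty$, so the maximiser set is a nonempty compact interval. The maximiser is unique if one cost is strictly convex, for instance when $\lambda>0$ or $\gamma\sigma^2>0$. Call it $q^\star<\infty$.

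\textbf{First-order condition \eqref{eq:capacity-foc}.} If $q^\star$ is interior, i.e. $q^\star>0$, and $C,F$ are differentiable there, then $\Pi'(q^\star)=0$, which rearranges to \eqref{eq:capacity-foc}. If differentiability fails, for example at a kink of $|q|^{1+\delta}$, the condition holds in subdifferential form, $m\in\partial C(q^\star)+\gamma q^\star\sigma^2+\partial F(q^\star)$. At interior points of \eqref{eq:cost-function} the function is differentiable, so no kink arises there. Concavity makes the condition sufficient as well as necessary.

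\textbf{Interpretive conclusion.} Compare the cases $m>0$ and $m=0$. When $m>0$, the marginal-value condition gives $q^\star>0$ and $\Pi(q^\star)>0$, so the forecast has genuine value. Finite $q^\star$ is exactly the statement that this value cannot be scaled further.
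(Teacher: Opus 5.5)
Your proof is correct and follows essentially the paper's route: concavity of $\Pi_t$, the first-order condition $\Pi_t'(q_t^\star)=0$ at an interior optimum, and finiteness because the marginal cost plus the risk charge eventually exceeds $m_t$. It is more careful than the paper's sketch, which infers existence of a maximiser from concavity alone and simply asserts that marginal cost is unbounded; you instead obtain existence from coercivity, isolate the extra growth assumption needed when $\gamma_t\sigma_t^2=0$, and justify the small-size claim via $m_t>C_t'(0^+)+F_t'(0^+)$ (one cosmetic slip: $|q|^{1+\delta}$ has no kink for $\delta>0$, since the kink at $q=0$ comes from the spread term $\tfrac{s_t}{2}|q|$).
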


\begin{proof}
The objective \eqref{eq:capacity-objective} is concave in $q$: $C_t$ and $F_t$ are convex and the risk term
$-\tfrac{\gamma_t}{2}q^2\sigma_t^2$ is concave, so a maximiser exists. At an interior optimum $\Pi_t'(q_t^\star)=0$,
which is \eqref{eq:capacity-foc}. Since marginal cost $C_t'$ and the risk charge $\gamma_t q\sigma_t^2$ increase
without bound in $|q|$ while $m_t$ is fixed, the optimal $q_t^\star$ is finite.
\end{proof}

The proposition explains why ``positive expected return'' is too weak a standard. A valuation signal may be correct
but slow; a high-frequency signal may be fast but capacity constrained; a short-rebate signal may disappear after
borrow fees; a convergence trade may be profitable in expectation but impossible to lever through a crisis. Economic
alpha is therefore a fixed point of forecast strength, cost technology, capital supply, and competitor behaviour.

Survival adds another constraint. Let $W_t$ be the wealth of the strategy and let $L_t$ be a capital, margin, or
redemption boundary. Define the survival time
\begin{equation}\label{eq:survival-time}
\tau_L=\inf\{t\le T: W_t\le L_t\}.
\end{equation}
A strategy with positive long-run expectation is institutionally viable only if
\begin{equation}\label{eq:survival-constraint}
\Prob(\tau_L\le T)\le \varepsilon
\end{equation}
for the capital provider's tolerance $\varepsilon$. This is the finite-horizon version of limits to arbitrage
\citep{shleifervishny1997}: the market can remain predictably mispriced longer than the arbitrageur can remain
solvent. Hard-to-predict markets are therefore not only statistically hard; they are also financially hard.

% ======================================================================
\section{Probability in finance is instrumental: the \texorpdfstring{$\Prob$--$\Q$}{P--Q} wedge}
\label{sec:pq}

Here is the sharpest argument that markets are not ``random'' in any ontic sense: the central computation of the
field is performed under a probability measure that practitioners \emph{know} to be false as a description of the
world, and they swap it in deliberately. A law you replace for convenience cannot be a law of nature.

Make it concrete in the Black--Scholes--Merton model \citep{blackscholes1973,merton1973}. Under the physical measure
$\Prob$ a stock follows geometric Brownian motion
\begin{equation}\label{eq:gbmP}
\dd S_t = S_t\!\left(\mu\dd t+\sigma\dd W_t^{\Prob}\right),
\end{equation}
with real expected return $\mu$. Let $r$ be the risk-free rate and define the \emph{market price of risk}
$\theta=(\mu-r)/\sigma$. By Girsanov's theorem \citep{girsanov1960}, the density
\begin{equation}\label{eq:girsanov}
\frac{\dd\Q}{\dd\Prob}\bigg|_{\F_T}=\exp\!\Bigl(-\theta\,W_T^{\Prob}-\tfrac12\theta^2 T\Bigr)
\end{equation}
defines an equivalent measure $\Q\sim\Prob$ under which $W_t^{\Q}=W_t^{\Prob}+\theta t$ is a Brownian motion, and
\eqref{eq:gbmP} becomes
\begin{equation}\label{eq:gbmQ}
\dd S_t = S_t\!\left(r\dd t+\sigma\dd W_t^{\Q}\right),
\qquad\text{so that}\qquad e^{-rt}S_t \text{ is a } \Q\text{-martingale.}
\end{equation}

\begin{proposition}[Pricing erases the premium]\label{prop:erase}
Under $\Q$ every asset earns the risk-free rate in expectation: the entire physical drift $\mu$, and with it the
equity premium $\mu-r$, is absent from \eqref{eq:gbmQ}. The premium reappears only on changing back to $\Prob$ via
\eqref{eq:girsanov}. Equivalently, in stochastic-discount-factor form \citep{cochrane2005}, every gross return $R$
satisfies
\begin{equation}\label{eq:sdf}
\E^{\Prob}\!\left[\,M\,R\,\right]=1,\qquad M=e^{-r}\,\frac{\dd\Q}{\dd\Prob},
\end{equation}
so the object that prices ($M$, equivalently $\Q$) is distinct from the object that describes realised returns
($\Prob$), and their wedge $\dd\Q/\dd\Prob$ \emph{is} the price of risk.
\end{proposition}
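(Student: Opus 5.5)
The plan is to verify the two halves of Proposition~\ref{prop:erase} directly from Girsanov's theorem as set up in \eqref{eq:gbmP}--\eqref{eq:gbmQ}, and then derive the SDF identity \eqref{eq:sdf} from the $\Q$-martingale property by an abstract Bayes argument.

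\emph{First step: erasure of the drift.} Substitute $\dd W_t^{\Prob}=\dd W_t^{\Q}-\theta\dd t$ into \eqref{eq:gbmP}. This gives
\[
\dd S_t = S_t\bigl((\mu-\sigma\theta)\dd t+\sigma\dd W_t^{\Q}\bigr)=S_t\bigl(r\dd t+\sigma\dd W_t^{\Q}\bigr),
\]
since $\sigma\theta=\mu-r$. Hence $\mu$ no longer appears, and the premium $\mu-r$ is absorbed into the shift of the Brownian motion. Applying It\^o's formula to $e^{-rt}S_t$ gives $\dd(e^{-rt}S_t)=\sigma e^{-rt}S_t\dd W_t^{\Q}$. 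This is a stochastic exponential of a Brownian motion with constant volatility, so it is a true $\Q$-martingale and not merely a local one. That yields \eqref{eq:gbmQ} and the statement $\E^{\Q}[S_t\mid\F_s]=e^{r(t-s)}S_s$, i.e.\ every asset earns $r$ in expectation under $\Q$. For the converse direction (the premium reappears under $\Prob$), compute $\E^{\Prob}[S_t]=S_0e^{\mu t}$ directly from \eqref{eq:gbmP}, or equivalently reweight $\Q$-expectations by $(\dd\Q/\dd\Prob)^{-1}$, whose Girsanov form is explicit from \eqref{eq:girsanov}.

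\emph{Second step: the SDF form.} Take a unit horizon so that $e^{-r}$ is the discount factor, consistent with \eqref{eq:sdf}. Let $R$ be any gross return on a traded claim whose discounted price is a $\Q$-martingale. By the first step (or by Theorem~\ref{thm:ftap}), $\E^{\Q}[e^{-r}R]=1$. Rewriting the $\Q$-expectation as a $\Prob$-expectation with the density gives
\[
\E^{\Q}[e^{-r}R]=\E^{\Prob}\Bigl[e^{-r}\tfrac{\dd\Q}{\dd\Prob}R\Bigr]=\E^{\Prob}[MR],
\]
which is \eqref{eq:sdf}. Since $\theta\neq0$ whenever $\mu\neq r$, the density is nondegenerate, so $M$ is not a constant multiple of $1$. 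This makes precise the claim that the pricing object differs from $\Prob$ exactly through the price-of-risk exponent $-\theta W_T-\tfrac12\theta^2T$.

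\emph{Main obstacle.} The delicate point is not the algebra but the integrability and the scope of ``every asset.'' Two things must be checked. First, the Girsanov density is a true martingale; here Novikov's condition holds trivially because $\theta$ is constant. Second, $R$ must be integrable under $\Q$, so that the change of measure in the second step is legitimate. For a general traded claim I would restrict to payoffs in $L^1(\Q)$ that are priced by $\Q$-expectation, as in a complete Black--Scholes market, and state \eqref{eq:sdf} for those returns.
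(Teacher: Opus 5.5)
Your proposal is correct and follows essentially the paper's route. The paper gives no separate proof: it relies on the Girsanov shift $W_t^{\Q}=W_t^{\Prob}+\theta t$, which turns \eqref{eq:gbmP} into \eqref{eq:gbmQ}, and then asserts \eqref{eq:sdf} as the corresponding change of measure; your checks (the true-martingale property of $e^{-rt}S_t$, Novikov for the density, and restricting \eqref{eq:sdf} to returns in $L^1(\Q)$ priced by $\Q$-expectation) make explicit what the paper leaves implicit.
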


\paragraph{The stochastic-discount-factor geometry.}
The same wedge has a geometric form. For any excess return $R^e$, no-arbitrage implies
\begin{equation}\label{eq:sdf-excess}
\E^{\Prob}[M R^e]=0.
\end{equation}
Since $\E^{\Prob}[M]>0$, \eqref{eq:sdf-excess} gives
\begin{equation}\label{eq:hj-bound}
\frac{|\E^{\Prob}[R^e]|}{\sqrt{\Var^{\Prob}(R^e)}}
\le
\frac{\sqrt{\Var^{\Prob}(M)}}{\E^{\Prob}[M]},
\end{equation}
the Hansen--Jagannathan bound \citep{hansenjagannathan1991}. High Sharpe ratios therefore require a volatile
stochastic discount factor. In words: a large predictable premium is not a free abnormality; it must be matched by a
large covariance with marginal value, unless the model of $M$ or the information set is wrong. The $\Prob$--$\Q$
wedge is therefore simultaneously a change of measure, a price-of-risk object, and a geometry of admissible Sharpe
ratios.

The interpretive force is immediate. We compute prices under $\Q$ \emph{because} it is the measure that makes
discounted prices martingales -- a mathematical convenience guaranteed to exist by Theorem~\ref{thm:ftap} -- and
under $\Q$, by construction, expected returns carry no information about real expected returns. No one believes the
world evolves under $\Q$; $\Q$ is a pricing device. A discipline whose core quantity is a probability measure
chosen \emph{for its convenience and known to misdescribe the data} is not treating probability as a feature of
reality. It is treating it as an instrument. Ontic randomness is not even in the conceptual vocabulary; what the
machinery encodes is a structured account of risk and ignorance.

\begin{remark}[Contrast with the quantum case]
In operational quantum mechanics the Born probabilities are not swappable in the way pricing measures are: the rule
$\Prob(\cdot)=\norm{\Pi\psi}^2$ is part of the theory's empirical content, and Bell's theorem rules out a local
hidden-variable replacement. In finance there is a whole equivalence class of measures $\{\Q\sim\Prob\}$ (a
continuum, in incomplete markets) and we select among them by modelling convention, calibration target, utility, or
entropy criterion. The very multiplicity of admissible probabilities is a signature of instrumentality.
\end{remark}

% ======================================================================
\section{Statistical hardness and learnability: why drift is almost invisible}
\label{sec:statistical-hardness}
\label{sec:learnability}

The $\Prob$--$\Q$ distinction is conceptually sharp, but empirically faint. Volatility can be estimated from many
short-horizon observations; drift cannot be made precise merely by sampling more finely over the same calendar time.
This is why expected returns are the hardest object in empirical finance \citep{merton1980}.

In the simplest diffusion model
\begin{equation}\label{eq:drift-diffusion}
\dd X_t=\mu_e\dd t+\sigma\dd W_t,
\end{equation}
where $\mu_e$ is the physical excess drift, observing the path on $[0,T]$ gives an estimator with standard error of
order $\sigma/\sqrt{T}$. The signal-to-noise ratio is therefore
\begin{equation}\label{eq:drift-snr}
\frac{|\mu_e|}{\sigma/\sqrt{T}}=|\theta|\sqrt{T},
\qquad \theta=\frac{\mu_e}{\sigma}.
\end{equation}
The key point is calendar time: increasing the sampling frequency inside a fixed year improves the estimate of
quadratic variation, but it does not create more independent evidence about the annual drift. To test a drift with
two-sided size $\alpha$ and power $1-\beta$, the approximate required horizon is
\begin{equation}\label{eq:drift-horizon}
T\gtrsim
\left(\frac{z_{1-\alpha/2}+z_{1-\beta}}{|\theta|}\right)^2.
\end{equation}
For an annual Sharpe ratio $\theta=0.33$, even ordinary $5\%$ significance and $80\%$ power imply a horizon on the
order of decades. For the far smaller post-cost alphas that matter in liquid markets, the required horizon is longer
still.

This is the statistical version of the essay's thesis. The physical drift is not absent. It is simply small relative
to diffusion noise, and the information distance between $\Prob$ and $\Q$ accumulates slowly. Hence two claims can
both be true: risk premia exist and are economically central, but their estimation is so slow that short samples look
nearly martingale. The market is not ontically random; it is statistically low-signal.

% ======================================================================
\section{What \emph{is} predictable: the Doob decomposition}
\label{sec:doob}

Before pressing the unpredictability thesis further, we guard it against overshooting. Markets are not perfectly
unpredictable; the precise statement is that the \emph{unexploitable} part of returns is unpredictable, while a
\emph{risk-compensated} part is predictable and is meant to be. The clean separator is the Doob decomposition.

See \citet{doob1953} and, in continuous time, the Doob--Meyer version in \citet{protter2005}.
\begin{theorem}[Doob decomposition]
\label{thm:doob}
Let $(X_n)_{n\ge0}$ be integrable and adapted to $(\F_n)$. Then $X$ admits a unique decomposition
\begin{equation}\label{eq:doob}
X_n = X_0 + M_n + A_n,\qquad
A_n=\sum_{k=1}^{n}\E\!\left[X_k-X_{k-1}\mid\F_{k-1}\right],
\end{equation}
where $M$ is a martingale with $M_0=0$ and $A$ is predictable with $A_0=0$. If $X$ is a submartingale then $A$ is
non-decreasing.
\end{theorem}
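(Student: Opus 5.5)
The plan is to prove existence by direct construction, then uniqueness, then the submartingale clause. For existence I would take $A$ as defined in \eqref{eq:doob} and set $M_n := X_n - X_0 - A_n$, with $A_0=M_0=0$ by the empty-sum convention. Integrability of each $X_k$ makes the conditional expectations $\E[X_k-X_{k-1}\mid\F_{k-1}]$ well defined and integrable, so every $A_n$ is integrable. Each summand of $A_n$ is $\F_{k-1}$-measurable with $k\le n$, so $A_n$ is $\F_{n-1}$-measurable, which is exactly predictability. It then follows that $M_n$ is adapted and integrable.

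The martingale property comes from a one-step computation on increments:
\begin{equation*}
M_n-M_{n-1} = (X_n-X_{n-1}) - \E[X_n-X_{n-1}\mid\F_{n-1}],
\end{equation*}
whose $\F_{n-1}$-conditional expectation vanishes. It suffices to check single steps, because the tower property then gives $\E[M_n\mid\F_m]=M_m$ for all $m\le n$.

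For uniqueness, suppose $X_n = X_0 + M_n + A_n = X_0 + M'_n + A'_n$ are two such decompositions. Then $D_n := M_n - M'_n = A'_n - A_n$ is a martingale that is also predictable, with $D_0=0$. Predictability gives $D_n = \E[D_n\mid\F_{n-1}]$, and the martingale property gives $\E[D_n\mid\F_{n-1}] = D_{n-1}$. Hence $D_n = D_{n-1}$, and by induction $D_n = D_0 = 0$ almost surely. Uniqueness is therefore only up to null sets, which is the natural sense of the statement.

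If $X$ is a submartingale, each increment $\E[X_k-X_{k-1}\mid\F_{k-1}]$ is $\ge 0$ almost surely, so $A$ is non-decreasing almost surely. No step is a real obstacle. The only points needing care are the interpretation of predictability in discrete time as $\F_{n-1}$-measurability (consistent with the paper's $\F_{t^-}$ convention) and the fact that all equalities hold almost surely.
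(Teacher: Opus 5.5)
Your proof is correct and complete. It is the standard argument: define $A$ by the sum of conditional increments, check the one-step martingale property of $M=X-X_0-A$, then show that the difference of two decompositions is a predictable martingale started at $0$, hence zero almost surely. The paper gives no proof of its own and defers to \citet{doob1953} and \citet{protter2005}; your argument is essentially the textbook proof found there.
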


Read \eqref{eq:doob} on a cumulative-excess-return process $X$. The predictable increment
$\Delta A_k=\E[\Delta X_k\mid\F_{k-1}]$ is the \emph{conditional expected excess return} -- the risk premium, known
one step ahead. The martingale increment $\Delta M_k=\Delta X_k-\Delta A_k$ has $\E[\Delta M_k\mid\F_{k-1}]=0$: it is
the \emph{innovation}, unforecastable by definition. Two corollaries delimit the thesis exactly.

\begin{corollary}[The trichotomy of return components]\label{cor:trichotomy}
\hfill
\begin{enumerate}[label=\textup{(\roman*)}]
\item \emph{Arbitrageable part: unpredictable by construction.} Any predictable component uncorrelated with priced
risk is, by Theorem~\ref{thm:ftap}, an arbitrage and is competed away; what remains in $\Delta M$ is conditionally
mean-zero. This is the kernel of truth in ``random walk.''
\item \emph{Risk-premium part: predictable, but not a free lunch.} $\Delta A$ is forecastable -- equity premium,
volatility dynamics, factor and value premia (\S\ref{sec:synthesis}) -- yet harvesting it requires bearing the very
risk it compensates, so it does not contradict \eqref{eq:mart} under $\Q$, where $A$ is reabsorbed into the
numéraire drift.
\item \emph{Regime/tail part: not probabilizable.} The decomposition \eqref{eq:doob} presupposes a fixed $\Prob$
under which the conditional expectations exist. When the law itself drifts (\S\ref{sec:knight}), even $A$ and $M$
are not well defined, and unpredictability passes beyond the reach of probability.
\end{enumerate}
\end{corollary}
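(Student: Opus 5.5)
The corollary is largely interpretive, so the plan is to attach a precise mathematical claim to each of the three items and prove each claim from Theorem~\ref{thm:doob} together with Theorem~\ref{thm:ftap}. Throughout I work in discrete time on a cumulative discounted excess-return process $X_n=\sum_{k\le n}\Delta X_k$, adapted and integrable under $\Prob$. Theorem~\ref{thm:doob} gives $X=X_0+M+A$ with $\Delta A_k=\E^{\Prob}[\Delta X_k\mid\F_{k-1}]$. By the Dalang--Morton--Willinger remark, no-arbitrage yields an equivalent martingale measure $\Q$. Let $Z_n=\E^{\Prob}[\dd\Q/\dd\Prob\mid\F_n]$ be the density process, with one-step ratios $\zeta_k=Z_k/Z_{k-1}$, which are positive and satisfy $\E^{\Prob}[\zeta_k\mid\F_{k-1}]=1$.

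\textbf{Step 1 (the key identity).} The $\Q$-martingale property of $X$ reads $\E^{\Prob}[\zeta_k\Delta X_k\mid\F_{k-1}]=0$. Expanding this gives
\[
\Delta A_k=\E^{\Prob}[\Delta X_k\mid\F_{k-1}]=-\operatorname{Cov}^{\Prob}\!\left(\zeta_k,\Delta X_k\mid\F_{k-1}\right).
\]
This is the conditional form of \eqref{eq:sdf-excess}, and it carries the content of both (i) and (ii).

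\textbf{Step 2 (item (i)).} I formalise ``predictable component uncorrelated with priced risk'' as $\operatorname{Cov}^{\Prob}(\zeta_k,\Delta X_k\mid\F_{k-1})=0$ on an $\F_{k-1}$-set $B$. By Step 1, $\Delta A_k=0$ on $B$. Conversely, suppose $\Delta A_k\neq0$ on some set of positive probability while $\zeta_k$ is uncorrelated with $\Delta X_k$ there. Then this contradicts the identity in Step 1, so no equivalent martingale measure exists and, by Theorem~\ref{thm:ftap}, NFLVR fails. Hence on $B$ the surviving increment is $\Delta X_k=\Delta M_k$, which is conditionally mean-zero.

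\textbf{Step 3 (item (ii)).} I take $\Delta A_k$ as it is. It is $\F_{k-1}$-measurable by Theorem~\ref{thm:doob}, so it is forecastable. By Step 1 it equals minus the conditional covariance with $\zeta_k$, so any strategy earning it loads on the priced risk $\zeta_k$. This is the conditional Hansen--Jagannathan bound \eqref{eq:hj-bound} applied to $\zeta_k$.

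To show $A$ is ``reabsorbed'' under $\Q$, I write the $\Q$-Doob decomposition $X=X_0+M^{\Q}+A^{\Q}$. Here $A^{\Q}\equiv0$, because $X$ is a $\Q$-martingale. The $\Prob$-drift therefore survives only through $\zeta$, which is consistent with \eqref{eq:mart}. In the Black--Scholes case this is exactly Proposition~\ref{prop:erase}.

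\textbf{Step 4 (item (iii)).} This item is a scope statement rather than a theorem. I would prove it by exhibiting a non-uniqueness. Take two laws $\Prob_1$ and $\Prob_2$ that agree on $\F_{k-1}$ but differ in the conditional law of $\Delta X_k$. Their Doob compensators $A^{(1)}$ and $A^{(2)}$ then differ. So without a fixed law, $A$ and $M$ are not well defined. This is in the spirit of the multiple-priors set of \S\ref{sec:knight}.

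\textbf{Main obstacle.} The main difficulty is Step 2, because ``competed away'' is economic rather than mathematical. The honest formal statement is conditional: \emph{if} NFLVR holds, then every $\zeta$-uncorrelated part of the drift vanishes. The ``competition'' dynamics must be left to \S\ref{sec:reflexive}. A technical point also needs care: integrability of $\zeta_k\Delta X_k$. I would handle it by localising, or by assuming square-integrability as in \eqref{eq:hj-bound}.
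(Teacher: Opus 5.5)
Your argument is sound, but it takes a genuinely different and more formal route than the paper. The paper gives no separate proof. The three items rest on inline readings of Theorem~\ref{thm:doob}: the assertion, made after Theorem~\ref{thm:ftap}, that a predictable risk-uncorrelated drift is an arbitrage; the interpretation of $\Delta A$ as premia that are ``reabsorbed into the numéraire drift'' under $\Q$; and a scope remark that the decomposition needs a fixed $\Prob$. You replace the first two readings with one conditional pricing identity, $\Delta A_k=-\operatorname{Cov}^{\Prob}(\zeta_k,\Delta X_k\mid\F_{k-1})$, which settles (i) and (ii) together. The whole drift is covariance with the density ratio, with no residual orthogonal to priced risk, and reabsorption becomes exact: $A^{\Q}\equiv0$, equivalently $\E^{\Q}[\Delta M_k\mid\F_{k-1}]=-\Delta A_k$. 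Because $\E^{\Prob}[\zeta_k\mid\F_{k-1}]=1$ for every equivalent martingale measure, this covariance is the same for all of them. So your reading of ``uncorrelated with priced risk'' is well posed even in an incomplete market, where the paper's phrase is ambiguous. The cost is that you assume NFLVR rather than deriving ``competed away''. The paper ties that mechanism to Proposition~\ref{prop:reflexive}, and you rightly defer it there.

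Two small repairs are needed. First, the ``conversely'' in Step~2 is circular. If $\zeta_k$ comes from an equivalent martingale measure, a nonzero drift uncorrelated with it is simply impossible by Step~1. If $\zeta_k$ is some other candidate density, the contradiction shows only that this candidate fails, not that no martingale measure exists. Relatedly, a positive drift uncorrelated with an externally specified risk factor is a positive-alpha bet that still carries risk, not an arbitrage. The paper's ``is an arbitrage'' holds literally only for a riskless drift, i.e.\ $\Delta X_k$ $\F_{k-1}$-measurable and nonzero on $B$. There $H_k=\mathbf{1}_B\operatorname{sign}(\Delta X_k)$ is an arbitrage outright, with no appeal to Theorem~\ref{thm:ftap}. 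Second, in Step~4, $\Prob_1$ and $\Prob_2$ must differ in the conditional \emph{mean} of $\Delta X_k$ on a set of positive probability. If they differ only in the conditional law, say only in variance, then $A^{(1)}=A^{(2)}$.
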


Thus, when finance says that ``markets are unpredictable,'' the precise statement is that ``the arbitrageable component is unpredictable relative to the relevant filtration and cost structure.'' The premium
component is precisely what quantitative investing seeks; it is predictable and risky, not random and free.

% ======================================================================
\section{Filtrations, sufficiency, and causal predictability}
\label{sec:filtration-causality}

The previous sections use a filtration $\Filt$, but the choice of filtration is not innocent. It is the formal
version of what the modeller, trader, regulator, or econometrician is allowed to know. A return can be unpredictable
relative to one filtration and predictable relative to a larger one. The same price process may therefore be a
martingale for the public investor, a submartingale for the informed trader, and a nearly deterministic execution
problem for the market maker observing the order book at microsecond frequency.

\begin{definition}[Filtration enlargement and sufficiency]
Let $\F_t\subseteq\G_t$ be two information sets and let $Y_{t+h}$ be a future payoff or return. The smaller
filtration $\F_t$ is \emph{sufficient for predicting $Y_{t+h}$ relative to $\G_t$} if
\begin{equation}\label{eq:filtration-sufficiency}
\Prob(Y_{t+h}\in A\mid\G_t)=\Prob(Y_{t+h}\in A\mid\F_t)
\qquad\text{for all measurable }A.
\end{equation}
Equivalently, $Y_{t+h}$ is conditionally independent of the extra information $\G_t\setminus\F_t$ once $\F_t$ is
known. Failure of \eqref{eq:filtration-sufficiency} is not randomness; it is missing information.
\end{definition}

The mutual-information version is immediate:
\begin{equation}\label{eq:incremental-mi}
I(Y_{t+h};\G_t)-I(Y_{t+h};\F_t)
= I(Y_{t+h};\G_t\mid\F_t)\ge 0.
\end{equation}
The extra filtration improves prediction exactly by the conditional mutual information it contributes. A superior
forecasting system is, in this sense, an engineered filtration enlargement: alternative data, faster feeds, cleaner
labels, better factor construction, and deeper order-book states all attempt to make $\G_t$ strictly richer than
$\F_t$.

Prediction, however, is still not causation. Granger causality asks whether the history of $X$ improves the
conditional distribution of $Y$ after controlling for the history already in the filtration \citep{granger1969}:
\begin{equation}\label{eq:granger}
\mathcal L(Y_{t+h}\mid\F_t^Y\vee\F_t^X)
\ne
\mathcal L(Y_{t+h}\mid\F_t^Y).
\end{equation}
This is a predictive notion. A structural causal claim is stronger: it asks what would happen under an intervention,
for example if a trader actually deploys the strategy, if many traders copy it, or if a regulator changes the rule
that created the signal \citep{pearl2009}. The reflexive character of markets makes the distinction essential. A
variable may predict returns in historical data and still fail as a causal trading lever because acting on it changes
liquidity, impact, crowding, and the beliefs of other agents.

\begin{observation}[Predictive edge versus intervention-stable edge]
\label{prop:intervention-edge}
Let $m_t=\E^{\Prob}[r_{t+h}\mid\G_t]$ be a predictive signal. The signal is an intervention-stable trading edge only
if the conditional law of $r_{t+h}$ under the intervention induced by trading on $m_t$ remains close to the
historical law used to estimate $m_t$:
\begin{equation}\label{eq:edge-stability}
\mathcal L\!\bigl(r_{t+h}\mid\G_t,\operatorname{do}(H=H(m_t))\bigr)
\approx
\mathcal L\!\bigl(r_{t+h}\mid\G_t\bigr)
\end{equation}
after costs, impact, and capital constraints. When \eqref{eq:edge-stability} fails, a forecast can be statistically
valid and economically self-destroying at the same time.
\end{observation}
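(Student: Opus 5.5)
The statement is a necessary condition, so I will argue by contraposition. If trading on $m_t$ changes the conditional law of $r_{t+h}$ in a way that matters for net value, then the realised edge differs from the estimated one. The natural vehicle is the net conditional alpha \eqref{eq:netalpha}, evaluated under two laws. Write $\Prob^{\mathrm{hist}}$ for the historical conditional law $\mathcal L(r_{t+h}\mid\G_t)$. Write $\Prob^{\mathrm{do}}$ for the interventional law $\mathcal L(r_{t+h}\mid\G_t,\operatorname{do}(H=H(m_t)))$. The edge the strategy actually earns is $\alpha^{\mathrm{net}}_t$ computed under $\Prob^{\mathrm{do}}$, with the costs $C_t,I_t,K_t$ also evaluated at the deployed position. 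The edge that justified deployment is the same expression under $\Prob^{\mathrm{hist}}$. An \emph{intervention-stable edge} I will define as one for which the deployed net alpha remains positive, so that \eqref{eq:netalphazero} is not triggered for the deploying trader.

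\textbf{Step 1: bound the edge gap by a distance between laws.} Assume the one-period payoff $q\,r_{t+h}$ at the chosen size $q=H(m_t)$ is bounded by $B$, or more generally has a uniformly integrable envelope. Then
\begin{equation*}
\bigl|\E^{\mathrm{do}}[q\,r_{t+h}\mid\G_t]-\E^{\mathrm{hist}}[q\,r_{t+h}\mid\G_t]\bigr|\le 2B\,\bigl\|\Prob^{\mathrm{do}}-\Prob^{\mathrm{hist}}\bigr\|_{TV}.
\end{equation*}
The same bound holds with a Wasserstein distance if the payoff is Lipschitz. Pinsker's inequality then turns the total-variation distance into a relative-entropy bound, which ties the result to the entropy ledger of \S\ref{sec:relent}.

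\textbf{Step 2: conclude by contraposition.} Suppose the historical net edge is $\alpha^{\mathrm{hist}}>0$. The deployed net edge is bounded below by $\alpha^{\mathrm{hist}}-2B\,\|\cdot\|_{TV}$, minus any extra impact and cost terms induced by the intervention. Hence positivity survives whenever the law distance is below $\alpha^{\mathrm{hist}}/(2B)$. This makes precise the reading of ``$\approx$'' in \eqref{eq:edge-stability} as a sufficient-closeness statement.

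For the \emph{necessity} direction, which is the ``only if'' actually asserted, I must show that a large law shift can destroy the edge. This needs a counterexample-type construction in which a shift of the given size flips the sign. I would take a Kyle-style linear impact model. Trading $q$ moves the conditional mean of $r_{t+h}$ by $-\lambda q$, so $m^{\mathrm{do}}=m_t-\lambda q$. If the shift exceeds $m_t$ net of costs, the edge becomes negative. This exhibits the final sentence of the Observation: $m_t$ is a correct conditional mean under $\Prob^{\mathrm{hist}}$, yet it is economically self-destroying.

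\textbf{Main obstacle.} The word ``only if'' is not literally true for an arbitrary distance. A law shift confined to higher moments, as in \eqref{eq:mean-not-law}, can leave the conditional mean and the edge intact. I would therefore restate the claim with the discrepancy measured through the class of net-payoff functionals of the strategy, using an integral probability metric over that class. With that metric, the necessity direction holds by definition together with the impact example above. I expect getting this formalisation right, rather than any calculation, to be the crux. It is also why the statement is labelled an Observation and not a theorem.
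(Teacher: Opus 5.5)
The paper gives no proof of this Observation. It uses the first sentence as a definition of an ``intervention-stable edge'', and it supports only the existence claim in the last sentence, via the Kyle model, the interventional law \eqref{eq:micro-do}, and Proposition~\ref{prop:reflexive}. Your linear-impact example ($m^{\mathrm{do}}=m_t-\lambda q$, edge negative once $\lambda q$ exceeds $m_t$ net of costs) is exactly that argument, and it correctly handles the final sentence. Your remark that a literal ``only if'' fails for total variation, because of higher-moment shifts (cf.\ \eqref{eq:mean-not-law}), is also right.

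The gap is in the direction of the logic for the first sentence. Steps~1--2 prove closeness $\Rightarrow$ survival, which is the converse of what is asserted. The impact example shows that \emph{some} shift destroys \emph{some} edge. No example can establish a necessary condition, because that is a universal claim: every surviving edge must have close laws. Your proposed repair is that with an integral probability metric over net-payoff functionals, necessity holds by definition. That is false under your own definition of stability, namely $\alpha^{\mathrm{net}}>0$ under $\Prob^{\mathrm{do}}$. Suppose deployment shifts the conditional mean of $r_{t+h}$ in the trade's favour, for example by crowding into a momentum position. Then $\alpha^{\mathrm{do}}>\alpha^{\mathrm{hist}}>0$, so the edge survives. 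Yet the distance is at least the increase in the strategy's own expected payoff, which can be made as large as you like. The same happens if a different functional in the class moves while the strategy's own payoff does not. What the metric actually yields is ``edge destroyed $\Rightarrow$ distance $\ge\alpha^{\mathrm{hist}}$''. That is just the contrapositive of Step~2, i.e.\ sufficiency again.

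Necessity is recovered if you measure ``$\approx$'' by the one-sided shortfall $\alpha^{\mathrm{hist}}-\alpha^{\mathrm{do}}$ of the strategy's own net payoff. That shortfall must include the deployment-induced changes in $C_t,I_t,K_t$. No distance between laws of $r_{t+h}$ controls those terms, so your threshold $\alpha^{\mathrm{hist}}/(2B)$ in Step~2 must also be reduced by them. Once you measure closeness this way, the first sentence is simply the definition of stability, and the TV/Pinsker machinery plays no role in it. That is how the paper reads the Observation.

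So the fix is to state the first sentence as definitional, and to keep Steps~1--2 as a separate quantitative sufficiency remark. That remark is a genuine addition the paper lacks, since it ties the Observation to the entropy ledger. Then let the impact example carry the final sentence.
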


This is the causal form of the essay's thesis. Markets are difficult not merely because information is hidden, but
because the act of using information is itself a market event. The correct object is therefore not ``the probability
of the next return'' in isolation. It is the conditional law of the next return under a specified filtration,
strategy, cost structure, and equilibrium response.

% ----------------------------------------------------------------------
\subsection{The model-selection landscape problem}
\label{sec:model-landscape}

A richer filtration is not enough. The modeller must compress it into a statistic, representation, or trading rule.
Let $Z_t=f_\lambda(\G_t)$ be the feature representation produced by model choice $\lambda\in\Lambda$. The statistic
$Z_t$ is prediction-sufficient for $Y_{t+h}$ if
\begin{equation}\label{eq:model-sufficiency}
\mathcal L(Y_{t+h}\mid\G_t)=\mathcal L(Y_{t+h}\mid Z_t),
\qquad\text{equivalently}\qquad
I(Y_{t+h};\G_t\mid Z_t)=0.
\end{equation}
The filtration contains all available information; the model decides which part of it is operationally retained.
Model error is therefore an information bottleneck, not a cosmetic estimation issue.

The \emph{landscape problem} is the difficulty of choosing $f_\lambda$ and its hyperparameters when many plausible
representations fit the past. If $\widehat U(\lambda)=U(\lambda)+\xi_\lambda$ is an estimated utility, Sharpe ratio,
or loss improvement, then selecting
\begin{equation}\label{eq:selection-landscape}
\widehat\lambda=\argmax_{\lambda\in\Lambda}\widehat U(\lambda)
\end{equation}
selects not only genuine edge but also favourable noise. In the stylised independent Gaussian case,
\begin{equation}\label{eq:winner-bias}
\E\!\left[\max_{1\le j\le N}\xi_j\right]\approx \sigma_\xi\sqrt{2\log N},
\end{equation}
so the best in-sample strategy in a large search landscape is upward biased even when every candidate has zero true
alpha. This is the mathematical core of the factor-zoo and backtest-overfitting problem \citep{harveyliuzhu2016,lopezdeprado2018}.

The consequence is conceptual as well as empirical. A market can be predictable relative to some unknown sufficient
representation $Z_t^\star$ while remaining hard to predict for agents searching over a vast unstable landscape of
models. Limited predictability is therefore partly a representation problem: the causes may be present in $\G_t$, but
not in the compressed statistic the trader can estimate, validate, and trade before the regime changes.

% ----------------------------------------------------------------------
\subsection{Forecast combination and the entropy of ensembles}
\label{sec:forecast-combination}

A practical forecaster rarely relies on a single filtration $\F_t$; she combines signals
$S^{(1)}_t,\dots,S^{(k)}_t$ from different sources. Let
$\mathcal S_t=\sigma(S^{(1)}_t,\dots,S^{(k)}_t)$ and $\G_t=\F_t\vee\mathcal S_t$. The chain rule for mutual
information gives
\begin{align}\label{eq:ensemble-mi}
I(r_{t+h}; \G_t)
&= I(r_{t+h}; \F_t)+I(r_{t+h}; \mathcal S_t\mid \F_t)\\
&\ge I(r_{t+h}; \F_t).
\end{align}
Thus more information cannot hurt the population Bayes predictor. However, the conditional entropy satisfies
\[
H(r_{t+h}\mid \G_t) = H(r_{t+h}\mid \F_t) - I(r_{t+h}; \mathcal S_t \mid \F_t),
\]
so the marginal gain from an additional signal is its conditional mutual information given the others. In liquid
markets, the law of diminishing returns applies: after a few strong predictors, additional signals contribute
negligibly to reducing the residual entropy. This is why modern quantitative funds focus on orthogonal,
low-correlation alphas rather than many correlated ones. In finite samples, however, more signals can hurt through
estimation error and model selection; the population inequality becomes an engineering problem once $k$ is large
relative to the effective sample size.

% ======================================================================
\section{Reflexivity: the market-specific source}
\label{sec:reflexive}

The second source of unpredictability has no analogue in physics, and it is the deep one. A weather system is
indifferent to the forecast; a market is not. Prices are set by agents acting on their forecasts, so a forecast is
an \emph{input} to the very process it predicts. Predictability is therefore \emph{self-defeating}: an exploitable
regularity, once acted upon, is removed by the trading it provokes.

We can state the mechanism as a fixed-point property consistent with Theorem~\ref{thm:ftap}.

\begin{proposition}[Self-defeating predictability]\label{prop:reflexive}
Suppose prices are set by a no-arbitrage (or competitive-equilibrium) condition that is a functional of the agents'
conditional forecasts. If some publicly available forecast implied a predictable, risk-uncorrelated, positive-mean
strategy, agents would scale it without bound, and the resulting demand would move prices until the strategy's
conditional mean excess return is zero. Hence in any viable equilibrium the only predictability that survives is
risk-compensated; the arbitrageable component is, endogenously, a martingale innovation.
\end{proposition}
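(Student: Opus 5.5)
The plan is to argue by contradiction against the equilibrium condition, reducing the statement to Theorem~\ref{thm:ftap} together with the Doob decomposition of Theorem~\ref{thm:doob}.

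\textbf{Setting up the objects.} Fix the public filtration $\Filt$ and a cumulative excess-return process $X$ for the traded asset. By Theorem~\ref{thm:doob}, write $X=X_0+M+A$, with predictable increment $\Delta A_k=\E^{\Prob}[\Delta X_k\mid\F_{k-1}]$. Next, project $\Delta A_k$ onto priced risk using the stochastic discount factor of Proposition~\ref{prop:erase}. Concretely, split
\[
\Delta A_k=\Delta A_k^{\mathrm{risk}}+\Delta A_k^{\mathrm{arb}},
\qquad
\Delta A_k^{\mathrm{risk}}=-\frac{\operatorname{Cov}^{\Prob}(M_k,\Delta X_k\mid\F_{k-1})}{\E^{\Prob}[M_k\mid\F_{k-1}]},
\]
so that \eqref{eq:sdf-excess} holds exactly when $\Delta A^{\mathrm{arb}}\equiv0$. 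The claim to prove is then that viability forces $\Delta A^{\mathrm{arb}}=0$ a.s.

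\textbf{Contradiction step.} Suppose $\Delta A^{\mathrm{arb}}_k>0$ on a set $B\in\F_{k-1}$ of positive probability. Construct the predictable strategy $H^{(n)}=n\,\mathbf 1_B$, held over period $k$ and hedged against the priced-risk component. Its gains have conditional mean $n\,\E[\Delta A^{\mathrm{arb}}_k\mathbf 1_B]$, which grows linearly in $n$, while its residual risk is uncorrelated with $M$ and therefore carries zero price. Rescaling by $n$ and truncating losses, in the spirit of the admissibility requirement in Definition~\ref{def:arb}, yields a sequence of gains that violates NFLVR. By Theorem~\ref{thm:ftap}, no $\Q\sim\Prob$ would then make $S$ a local martingale, so the market is not viable.

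\textbf{Price adjustment and conclusion.} The equilibrium assumption, that prices are a functional of forecasts, is what supplies the ``prices move'' half of the statement. Unbounded demand is inconsistent with market clearing, so the equilibrium price $S_{k-1}$ must adjust. The expected return $\Delta A^{\mathrm{arb}}$ then falls monotonically in demand, and a fixed point is reached only at zero. Once $\Delta A^{\mathrm{arb}}=0$, the surviving predictable part is $\Delta A^{\mathrm{risk}}$, which vanishes under $\Q$ by \eqref{eq:gbmQ}. The remaining piece, $\Delta M$, is the martingale innovation, which is Corollary~\ref{cor:trichotomy}(i).

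\textbf{Main obstacle.} The hard step is passing from ``positive conditional mean with zero priced risk'' to a genuine NFLVR violation. A strategy with positive mean but nonzero idiosyncratic variance is not literally an arbitrage. I would handle this in one of two ways. The first is to work in a setting where unpriced risk is diversifiable: take many independent copies, or invoke a law-of-large-numbers argument across periods so that the variance per unit of gain vanishes. The second is to appeal directly to the Hansen--Jagannathan bound \eqref{eq:hj-bound}, noting that unbounded scaling of a zero-beta positive-mean payoff would force $\Var(M)$ to be infinite. Either route needs an explicit modelling hypothesis that is only implicit in the proposition, so I would state it explicitly in the proof.
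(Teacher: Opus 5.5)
The gap is the contradiction step, and it goes beyond the variance issue you flag. Your split of $\Delta A_k$ is taken relative to a fixed $M$. Let $\Q^{M}$ be the measure with one-step density $M_k/\E^{\Prob}[M_k\mid\F_{k-1}]$. A two-line computation gives $\E^{\Q^{M}}[\Delta X_k\mid\F_{k-1}]=\Delta A^{\mathrm{arb}}_k$ exactly. So your target ``viability forces $\Delta A^{\mathrm{arb}}=0$'' says that this particular $\Q^{M}$ is a martingale measure. Theorem~\ref{thm:ftap} only gives you that \emph{some} equivalent martingale measure exists. If $M$ is built from such a measure, $\Delta A^{\mathrm{arb}}\equiv0$ by definition and no strategy is needed. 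For any other $M$ the implication is false.

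Here is a counterexample: one period, zero rate, $\Delta X=\pm1$ with $\Prob(\Delta X=1)=0.6$, and $M\equiv1$. Then $\Delta A^{\mathrm{risk}}=0$ and $\Delta A^{\mathrm{arb}}=0.2>0$. The strategy $H^{(n)}=n$ loses $n$ with probability $0.4$, so its negative part grows rather than vanishes. Truncating the losses is not free either: it means buying protection at a positive price. Yet NFLVR holds, since $\Q(\Delta X=1)=\tfrac12$ is an equivalent martingale measure. NFLVR rules out riskless profits in the limit, not positive-mean risky bets. So ``positive mean, zero priced risk'' never produces a free lunch on its own.

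Neither of your repairs fixes this. The Hansen--Jagannathan bound \eqref{eq:hj-bound} constrains the Sharpe ratio, which is invariant under $R^e\mapsto nR^e$, so scaling cannot push $\Var(M)$ to infinity. What the identity $\E^{\Prob}[MR^e]=0$ actually gives is the one-line fact that a positive-mean excess return cannot be uncorrelated with a \emph{valid} $M$, which is the trivial case above. The law-of-large-numbers route needs infinitely many independent unpriced payoffs, and a fixed $\R^d$-valued $S$ on $[0,T]$ does not supply them. That route leads to an asymptotic-arbitrage notion with approximate pricing, not to a violation of NFLVR.

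The paper never argues through non-viability. Its proof is an explicitly heuristic equilibrium argument: competitive agents scale demand in $g_t>0$, and a price map increasing in aggregate demand moves prices until $g_t=0$. Theorem~\ref{thm:ftap} is invoked only afterwards, to identify the surviving residual as a $\Q$-martingale innovation. That is your ``price adjustment'' paragraph, and it is where all the content lies. Drop the NFLVR detour and state explicitly the assumption doing the work. That is either a price map monotone in demand, or, in first-order-condition form, the agents' Euler equation $\E^{\Prob}[M_k\Delta X_k\mid\F_{k-1}]=0$ for their marginal-utility discount factor, which with zero conditional covariance forces $\E^{\Prob}[\Delta X_k\mid\F_{k-1}]=0$.

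Two smaller points. You use $M$ both for the Doob martingale and for the discount factor. Also, \eqref{eq:gbmQ} is specific to Black--Scholes; the general fact you need is the identity $\E^{\Q^{M}}[\Delta X_k\mid\F_{k-1}]=\Delta A^{\mathrm{arb}}_k$.
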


\begin{proof}[Sketch]
Suppose a public forecast implies a risk-uncorrelated strategy with conditional mean excess return $g_t>0$ on a set
of positive probability. Competitive agents bearing no compensating risk scale demand in $g_t$; under a price map
increasing in aggregate demand, prices adjust until $g_t=0$. Hence only risk-compensated conditional means survive,
and the arbitrageable residual is a $\Q$-martingale innovation (Theorem~\ref{thm:ftap}). The argument is an
equilibrium heuristic, valid under the same frictionless-scaling idealisation as Theorem~\ref{thm:ftap}.
\end{proof}

This is the engine behind the martingale of \S\ref{sec:ftap}: the unforecastability \eqref{eq:mart} is not a
primitive assumption about price ``noise'' but an \emph{equilibrium outcome} of agents eliminating what they can
forecast. It is also why the correct object is a no-arbitrage martingale and the billiard of classical chaos is the
wrong analogy: no Lyapunov exponent erases its own predictability, but a profit-seeking forecaster does.

Pushed to its limit the mechanism is paradoxical, and the paradox is informative.

\begin{proposition}[Impossibility of fully efficient markets; \citealp{grossmanstiglitz1980}]\label{prop:gs}
If information is costly and prices are fully revealing, then informed agents earn no rent over the uninformed and
have no incentive to gather information -- so prices cannot be fully revealing. Equilibrium prices are therefore only
\emph{partially} revealing: a strictly positive, unpredictable component must persist to compensate information
acquisition.
\end{proposition}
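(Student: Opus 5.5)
The plan is to prove Proposition~\ref{prop:gs} by contradiction inside the standard Grossman--Stiglitz CARA--Gaussian economy, and then to read off the partial-revelation conclusion from the interior equilibrium.

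\textbf{Setup.} A risky payoff $u=\vartheta+\epsilon$ is traded against a riskless bond. Agents have CARA utility with risk aversion $a$. An agent may pay a cost $c>0$ to observe $\vartheta$, and a noisy supply $x$ prevents perfect inversion. Let $\lambda\in[0,1]$ be the fraction who become informed. Write $\mathcal F^I=\sigma(\vartheta,P)$ for the informed information set and $\mathcal F^U=\sigma(P)$ for the uninformed one.

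\textbf{Step 1 (value of information).} I will first compute the ratio of expected utilities of the two types. With CARA and Gaussian conditional laws, this ratio reduces to
\[
e^{ac}\sqrt{\Var(u\mid\mathcal F^I)/\Var(u\mid\mathcal F^U)}.
\]
The informed agent strictly gains from the signal exactly when $\Var(u\mid P)>\Var(u\mid\vartheta,P)$, that is, when the signal carries positive conditional mutual information $I(u;\vartheta\mid P)>0$ beyond the price. This is the filtration-sufficiency criterion \eqref{eq:filtration-sufficiency}, with $\F_t=\sigma(P)$ and $\G_t=\sigma(\vartheta,P)$.

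\textbf{Step 2 (contradiction under full revelation).} Suppose prices are fully revealing, meaning $\sigma(P)\supseteq\sigma(\vartheta)$; in the model this is the case with no noise supply. Then $\mathcal F^U=\mathcal F^I$, so the variance ratio equals one, and the informed agent's utility is that of an uninformed agent minus the cost $c$. The first-order gain of \eqref{eq:incremental-mi} is zero while the cost is strictly positive. Hence no agent chooses to become informed, so $\lambda=0$. But with $\lambda=0$ the price is a function of public information and supply only, so it cannot be measurable with respect to $\vartheta$. This contradicts full revelation, and there is no fully revealing equilibrium.

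\textbf{Step 3 (interior equilibrium).} I then turn to the positive statement. I will write the linear equilibrium price $P_\lambda=\alpha_1+\alpha_2(\vartheta-\tfrac{a\sigma_\epsilon^2}{\lambda}x)$. The informativeness of the price, $\rho_\lambda^2=\operatorname{Corr}^2(\vartheta,P_\lambda)$, is increasing in $\lambda$ and stays strictly below $1$ whenever the noise variance is positive. The net benefit of becoming informed, $\gamma(\lambda)$, is then continuous and decreasing in $\lambda$. The intermediate value theorem gives a $\lambda^\star\in(0,1)$ with $\gamma(\lambda^\star)=0$ whenever the cost $c$ lies between the endpoint gains. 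At $\lambda^\star$ we have $\Var(u\mid P)>\Var(u\mid\vartheta,P)$, so the price is only partially revealing. The residual $\vartheta-\E[\vartheta\mid P]$ is a strictly positive component that is unpredictable from prices, and its size exactly compensates $c$.

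\textbf{Main obstacle.} The hard part is Step 3: verifying that $\gamma$ is monotone in $\lambda$ through the dependence of $\rho_\lambda$ on $\lambda$, and that the linear conjecture for $P_\lambda$ is self-consistent. Both require the Gaussian algebra of the original paper, and I would cite \citet{grossmanstiglitz1980} for those computations rather than reproduce them. Steps 1--2 carry the conceptual content and are elementary.
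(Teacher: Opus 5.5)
The paper gives no proof of this proposition. It imports the result from \citet{grossmanstiglitz1980} and uses it only interpretively, so your proposal is not competing with an argument in the paper. You reconstruct the original CARA--Gaussian argument, and with one caveat about the equilibrium concept (below) the reconstruction is sound.

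\textbf{What your route buys.} First, the informal statement becomes a theorem about a specified economy. Second, its vaguest phrase acquires a value. Since $\Var(u\mid\vartheta,P)=\sigma_\epsilon^2$, the interior indifference condition $e^{ac}\sqrt{\Var(u\mid\vartheta,P)/\Var(u\mid P)}=1$ says exactly $I(u;\vartheta\mid P_{\lambda^\star})=ac$ nats. Equivalently, $\Var(\vartheta\mid P_{\lambda^\star})=(e^{2ac}-1)\sigma_\epsilon^2>0$. So the ``strictly positive component'' is the conditional mutual information of \eqref{eq:incremental-mi}, priced at $ac$; say this explicitly, since it is the cleanest link to the paper's entropy ledger. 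Third, your route exposes what the statement's ``therefore'' hides. Step~2 only excludes fully revealing equilibria; concluding that equilibrium prices are partially revealing also needs some equilibrium to exist. That is where the noise supply of Step~3 is indispensable: without noise, and with $ac<\tfrac12\log(1+\sigma_\vartheta^2/\sigma_\epsilon^2)$, there is no equilibrium at all.

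\textbf{What to tighten.}
\begin{enumerate}[label=\textup{(\arabic*)},leftmargin=2.2em]
\item \emph{Measurability direction in Step~2.} ``The price cannot be measurable with respect to $\vartheta$'' is backwards, because a constant price is trivially $\sigma(\vartheta)$-measurable. What you need is $\sigma(\vartheta)\not\subseteq\sigma(P)$. This follows from independence of $P$ and $\vartheta$ once $\Var(\vartheta)>0$.
\item \emph{The $\lambda=0$ price.} The claim that with $\lambda=0$ the price depends only on public information and supply does not follow from market clearing plus rational expectations. Take the noiseless economy and suppose all agents conjecture the fully revealing price map. Then each uninformed demand coincides with the informed demand (CARA demand ignores $c$), so that same map clears the market state by state. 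Since nobody then pays $c$, this is a complete, fully revealing equilibrium with $\lambda=0$: a formal counterexample to the proposition. Step~2 therefore needs an explicit informational-feasibility requirement in the equilibrium concept: the price must be measurable with respect to the information traders actually hold. Grossman and Stiglitz impose this implicitly by working with price functions generated by informed demand.
\item \emph{Corner cases in Step~3.} Step~3 treats only an interior $\lambda^\star$. When $c$ lies outside the range of endpoint gains, the equilibrium is at a corner. At $\lambda^\star=0$ nothing is acquired and the price is uninformative. At $\lambda^\star=1$ the price is still only partially revealing because $\rho_1^2<1$. Both cases are consistent with the proposition but should be recorded.
\end{enumerate}
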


Unpredictability is thus not a defect of markets but a \emph{necessary feature} of an informational equilibrium: it
is the rent that pays for price discovery. The rational-expectations tradition \citep{lucas1978} formalises the same
endogeneity -- expectations are model-consistent inputs to the equilibrium -- and \citet{soros2013} names the
two-way feedback between beliefs and prices ``reflexivity.'' The reflexivity described here is closely related to the
evolutionary dynamics of Lo's Adaptive Markets Hypothesis \citep{lo2004}, where strategies compete, reproduce, and
die, generating a time-varying, context-dependent level of predictability. \citet{black1986} draws the operational
corollary: markets need noise traders to be liquid, and that noise is exactly what keeps the residual component
unforecastable. The disanalogy with physics could not be sharper: here the act of predicting alters what is
predicted.

% ======================================================================
\section{Microstructure: the causal chain beneath a tick}
\label{sec:microstructure}

The strongest antidote to the phrase ``markets are random'' is market microstructure. At the level where prices
actually move, a tick is not a draw from an urn. It is the result of a market order consuming displayed liquidity, a
limit order improving the quote, a dealer revising inventory risk, a queue being cancelled, or an informed trader
masking demand inside noise. Stochastic models are useful because the modeller cannot observe all of these causes,
not because the causes are absent.

The canonical example is the Kyle model \citep{kyle1985}. A risky asset has terminal value $v$, an informed trader
observes $v$, noise traders submit order $u$, and the market maker observes only total order flow
\begin{equation}\label{eq:kyle-y}
y=x+u,
\end{equation}
where $x$ is the informed order. Competitive pricing gives
\begin{equation}\label{eq:kyle-price}
p(y)=\E[v\mid y].
\end{equation}
The price is therefore efficient relative to the market maker's filtration generated by $y$, but not relative to the
informed trader's larger filtration generated by $v$. The same price change can be unpredictable for the market
maker, predictable for the informed trader, and still causal all the way down. The Glosten--Milgrom adverse-selection
model makes the same point in a bid--ask setting: spreads compensate liquidity suppliers for trading against agents
with superior information \citep{glostenmilgrom1985}.

Microstructure also explains why high-frequency predictability does not automatically become a free lunch. Let
$q_t$ denote signed order flow and write a stylised impact decomposition
\begin{equation}\label{eq:impact}
\Delta p_{t+h}=G_h(q_t,\mathcal O_t)+\eta_{t+h},
\end{equation}
where $\mathcal O_t$ is the order-book state, $G_h$ is transient or permanent impact, and $\eta$ collects news and
unobserved order-flow shocks. Order flow is empirically persistent, but price changes can remain close to martingale
because market makers adjust quotes, liquidity providers shade depth, and the cost of crossing the spread consumes
the predictable component \citep{hasbrouck2007}. This is the microstructural version of
\eqref{eq:netalphazero}: raw predictability is converted into spreads, impact, and adverse-selection premia.

\paragraph{Impact as intervention.}
Equation~\eqref{eq:impact} is not only predictive; it is causal. Submitting an order is an intervention on the book.
If a forecast recommends position $q$, the realised return is no longer drawn from the passive law
$\mathcal L(\Delta p\mid\mathcal O_t)$ but from the interventional law
\begin{equation}\label{eq:micro-do}
\mathcal L(\Delta p\mid\mathcal O_t,\operatorname{do}(q)).
\end{equation}
This is why high-frequency alpha is fragile: the same action that monetises the signal changes the queue, widens the
spread faced by the trader, reveals information, and invites adverse selection. Market impact is the microstructure
form of reflexivity.

The stylised facts of returns are therefore not paradoxical \citep{cont2001}. Returns can have weak linear
autocorrelation while absolute and squared returns display strong dependence; order flow can be predictable while
mid-price changes are close to conditionally mean-zero; liquidity can be forecastable while the profit from using
that forecast is competed into the bid--ask spread. The market is not a generator of metaphysical randomness. It is
a mechanism that hides causes behind aggregation and then prices access to those causes.

\begin{remark}[Why microstructure belongs in the argument]
Without microstructure, the essay risks sounding metaphysical: prices are said to be causal, but the causal chain is
not shown. Microstructure supplies the missing bridge. It turns the abstract sequence
``information $\to$ order $\to$ execution $\to$ price'' into models of quotes, queues, inventory, adverse selection,
and impact.
\end{remark}

% ======================================================================
\section{Knightian uncertainty: beyond probability}
\label{sec:knight}

The third and subtlest reason ``randomness'' misleads is that classical randomness is a draw from a \emph{fixed,
known} law -- an urn whose composition one cannot see but which does not change. Markets violate the premise. The
data-generating process drifts, regimes switch, and the future contains states never yet sampled. This is the
distinction of \citet{knight1921} between \emph{risk} (a known distribution) and \emph{uncertainty} (an unknown or
unstable one), and of \citet{keynes1936} on the irreducibly non-probabilistic character of long-horizon
expectation: ``we simply do not know.''

\begin{definition}[Ambiguity]\label{def:ambig}
A decision maker faces \emph{ambiguity} (Knightian uncertainty) when beliefs are represented not by a single prior
$\Prob$ but by a set of priors $\mathcal{P}$. Under the maxmin criterion \citep{gilboaschmeidler1989}, an action $a$
is evaluated by
\begin{equation}\label{eq:maxmin}
a \;\longmapsto\; \min_{\Prob\in\mathcal{P}} \E^{\Prob}\!\left[u(a)\right],
\end{equation}
the worst case over the entertained models.
\end{definition}

\paragraph{A concrete regime-switching illustration.}
Consider a simple two-regime setting: a ``normal'' regime $\mathcal{N}$ with return distribution $\mathcal{P}_{\mathcal{N}}$ and a ``crisis'' regime $\mathcal{C}$ with $\mathcal{P}_{\mathcal{C}}$. 
The agent does not know which regime will hold over the next period; she only knows that the true $\Prob$ is either $\mathcal{P}_{\mathcal{N}}$ or $\mathcal{P}_{\mathcal{C}}$, but cannot assign a probability to either. 
Her set of priors is $\mathcal{P} = \{\mathcal{P}_{\mathcal{N}}, \mathcal{P}_{\mathcal{C}}\}$. 
Under the maxmin criterion \citep{gilboaschmeidler1989}, she evaluates a trading strategy with payoff $X$ by
\[
\min_{\Prob\in\{\mathcal{P}_{\mathcal{N}},\mathcal{P}_{\mathcal{C}}\}} \E^{\Prob}[X].
\]
If instead she has a reference model $\Prob_{\mathrm{ref}}$ (e.g., a long-run average) but fears misspecification, the Hansen--Sargent robust control approach uses a penalty:
\[
\min_{\Q}\left\{ \E^{\Q}[ -X ] + \theta D(\Q\|\Prob_{\mathrm{ref}}) \right\},
\]
where $\theta$ is a confidence parameter. 
The set of plausible models becomes a relative-entropy ball $\{\Q: D(\Q\|\Prob_{\mathrm{ref}}) \le \eta\}$, which shrinks to the singleton $\Prob_{\mathrm{ref}}$ as $\theta\to\infty$. 
This bridges the Knightian set $\mathcal{P}$ and the entropy penalties of \S\ref{sec:relent}.

\paragraph{Dynamic ambiguity.}
In multiperiod markets the set of priors must also be dynamically coherent. If the modeller can choose one prior
today and an unrelated prior tomorrow, maxmin preferences become time inconsistent. Dynamic multiple-prior models
therefore impose stability under conditioning and pasting -- often called rectangularity -- so that the ambiguity set
is compatible with the filtration \citep{chenepstein2002}. For finance this matters because stress testing is not a
list of static scenarios. It is a family of conditional transition laws. A credible ambiguity set must say not only
which crises are possible, but how beliefs update as prices, liquidity, and balance sheets evolve.

The empirical reality of ambiguity is the Ellsberg paradox \citep{ellsberg1961}: agents strictly prefer bets on
known odds to bets on unknown odds, a preference no single prior can rationalise. Its pricing consequence is
structural. When the relevant model is not a point but a set $\mathcal{P}$, the single-prior machinery of
Theorem~\ref{thm:ftap} -- which fixes $\Prob$ up to equivalence -- gives way to a \emph{set} of martingale measures
$\mathcal{Q}$, and assets are priced by an interval rather than a point:
\begin{equation}\label{eq:bounds}
\underline{\pi}(X)=\inf_{\Q\in\mathcal{Q}}\E^{\Q}[X]
\;\le\;
\overline{\pi}(X)=\sup_{\Q\in\mathcal{Q}}\E^{\Q}[X],
\end{equation}
the sub- and super-replication bounds; robust-control formulations \citep{hansensargent2008} reach the same
conclusion from a max-min over models. This is the tier Corollary~\ref{cor:trichotomy}(iii) flagged: where there is
no fixed $\Prob$, there is no probability to be unpredictable \emph{within}. The uncertainty is not aleatory but
deep, and it is the part of market behaviour -- regime breaks, crises, structural change -- that resists not only
prediction but even probabilistic description. Keynes and Knight, not Kolmogorov. As shown in Section~\ref{sec:relent},
such Knightian ambiguity can be quantified by a set of priors expressed as a relative-entropy ball around a
reference model, making the distinction between risk (small ball) and deep uncertainty (large ball) a matter of
degree.

% ======================================================================
\section{Entropy: the measure of unpredictability}
\label{sec:entropy}

We have spoken of unpredictability qualitatively -- the martingale property \eqref{eq:mart}, the trichotomy of
Corollary~\ref{cor:trichotomy}. Information theory makes it a \emph{number}: Shannon's axioms single out, up to a
choice of base, a unique measure of the uncertainty carried by a distribution \citep{shannon1948}.

See \citet{shannon1948} and \citet{coverthomas2006}.
\begin{definition}[Entropy, conditional entropy, mutual information]
For a discrete random variable $X$ with law $(p_i)$, the \emph{entropy} is $H(X)=-\sum_i p_i\log p_i$; for a density
$f$, the \emph{differential entropy} is $h(X)=-\int f\log f$. Given a sub-$\sigma$-algebra $\F$ (an information
set), the \emph{conditional entropy} $H(X\mid\F)$ is the residual uncertainty in $X$ once $\F$ is known, and the
\emph{mutual information}
\begin{equation}\label{eq:mutualinfo}
I(X;\F)\;=\;H(X)-H(X\mid\F)\;\ge\;0
\end{equation}
measures how much $\F$ reduces it, with $I(X;\F)=0$ if and only if $X$ is independent of $\F$.
\end{definition}

Two facts fix the vocabulary. \emph{Conditioning cannot increase entropy:} $H(X\mid\F)\le H(X)$ -- information
weakly reduces unpredictability, the entropic shadow of the filtration $\Filt$ of \S\ref{sec:setup}. And among all
distributions of a given variance $\sigma^2$, the Gaussian uniquely \emph{maximises} differential entropy, at
$h=\tfrac12\log(2\pi e\sigma^2)$: the bell curve the central limit theorem makes universal is also the
least-structured law consistent with a second moment -- maximal unpredictability at fixed scale.

\paragraph{Forecastability is conditional information.}
This yields a precise information-theoretic statement, but it must be stated carefully. Let $r_k=\Delta X_k$ be the
next return and $\F_{k-1}$ the present information. The total informational content of the past for the next return is
$I(r_k;\F_{k-1})$. Mean forecastability is only one functional of that conditional law: is
$\E[r_k\mid\F_{k-1}]$ non-trivial? Scale forecastability is another: is $\Var(r_k\mid\F_{k-1})$ non-trivial? Higher
conditional cumulants add still more channels. The no-arbitrage martingale property \eqref{eq:mart} closes the
arbitrageable mean channel under the pricing measure, but it does not force $I(r_k;\F_{k-1})$ to be zero.

\begin{proposition}[The entropic trichotomy]\label{prop:entropic}
Market efficiency neutralises the \emph{predictable mean component} of risk-adjusted returns; it does not eliminate
all mutual information between future and past. The innovation $\Delta M_k$ has conditional mean zero yet may have a
forecastable conditional variance, jump intensity, liquidity state, tail thickness, or correlation regime. Its
conditional law is therefore structured even when its conditional mean is not.
\end{proposition}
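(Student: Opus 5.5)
The statement has two halves, and I will prove them in turn: a positive claim, that efficiency kills the predictable mean, and a negative one, that it does not kill all information. For the positive half I will use the Doob decomposition (Theorem~\ref{thm:doob}) applied to the cumulative risk-adjusted (discounted, under $\Q$) return process $X$. Theorem~\ref{thm:ftap} makes $X$ a $\Q$-(local) martingale, and I localise if necessary. By uniqueness of the decomposition, the predictable part $A$ is then identically zero under $\Q$. Under $\Prob$, Corollary~\ref{cor:trichotomy}(ii) says $A$ is exactly the risk-compensated drift. So the only mean component surviving efficiency is the premium, and its arbitrageable part is zero. This is precisely the sense in which efficiency ``neutralises the predictable mean component.''

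For the negative half it suffices to exhibit a single admissible market in which $\E[\Delta M_k\mid\F_{k-1}]=0$ but $I(\Delta M_k;\F_{k-1})>0$. I will reuse the construction \eqref{eq:mds-vol}: $\Delta M_k=\sigma_{k-1}\epsilon_k$, with $\epsilon_k$ i.i.d.\ standard normal and independent of $\F_{k-1}$, and $\sigma_{k-1}$ an $\F_{k-1}$-measurable, non-degenerate positive variable. For concreteness take $\sigma_{k-1}$ to be a GARCH recursion, or simply a two-point variable taking values $\sigma_L<\sigma_H$ with positive probabilities. This is a martingale difference, and I take it to be the discounted increment under a measure $\Q=\Prob$. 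No arbitrage then holds trivially by Theorem~\ref{thm:ftap}, so the example is a genuinely efficient market.

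Next I compute the mutual information via \eqref{eq:mutualinfo}. Conditionally on $\F_{k-1}$ the law is $N(0,\sigma_{k-1}^2)$, which gives
\[
h(\Delta M_k\mid\F_{k-1})=\E\bigl[\tfrac12\log(2\pi e\sigma_{k-1}^2)\bigr].
\]
The unconditional law is a nondegenerate scale mixture of Gaussians with variance $\E[\sigma_{k-1}^2]$, so it is not Gaussian. The Gaussian maximum-entropy property then gives $h(\Delta M_k)<\tfrac12\log(2\pi e\,\E\sigma^2_{k-1})$, which is the wrong direction for a direct comparison. I will therefore argue through independence instead. By the definition, $I=0$ iff $\Delta M_k$ is independent of $\F_{k-1}$. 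But $\E[\Delta M_k^2\mid\F_{k-1}]=\sigma_{k-1}^2$ is non-constant, whereas independence would force this conditional second moment to be constant. Hence $I(\Delta M_k;\F_{k-1})>0$.

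The same template covers the other channels listed in the statement. Replacing $\sigma_{k-1}$ by an $\F_{k-1}$-measurable jump intensity, tail index, or correlation matrix leaves the conditional mean at zero while making some conditional functional non-constant, which again contradicts independence. The main obstacle is interpretive rather than computational: pinning down ``market efficiency'' as the martingale property \eqref{eq:mart}, and making sure the example is fully compatible with Theorem~\ref{thm:ftap} (for instance by taking $\Q=\Prob$ with zero premium) rather than a mere statistical model. Once that identification is fixed, the argument reduces to \eqref{eq:mean-not-law} combined with the characterisation of $I=0$ as independence.
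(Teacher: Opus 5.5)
Your proposal is correct and follows essentially the same route as the paper: the martingale property \eqref{eq:mart} closes the mean channel, and the stochastic-volatility example \eqref{eq:mds-vol} shows that $I(r_k;\F_{k-1})$ can be strictly positive through the conditional second moment. Your extra steps make explicit what the paper's short proof leaves implicit: you take $\Q=\Prob$ so the example is genuinely arbitrage-free, and you note that a non-constant $\E[\Delta M_k^2\mid\F_{k-1}]$ rules out independence and hence rules out $I=0$.
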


\begin{proof}
The martingale property \eqref{eq:mart} sets $\E^{\Q}[r_k\mid\F_{k-1}]$ equal to the risk-free drift, so the
mean-channel mutual information vanishes. It places no constraint on the conditional variance, which may be
$\F_{k-1}$-measurable and non-constant (as in \eqref{eq:mds-vol}); the same holds for higher conditional cumulants.
Hence $I(r_k;\F_{k-1})$ can be strictly positive through moments above the first, which is the claim.
\end{proof}

\paragraph{The entropy rate.}
For a stationary return process the irreducible unpredictability per period is the \emph{entropy rate}
\begin{equation}\label{eq:entropyrate}
h\;=\;\lim_{n\to\infty}\frac1n\,H(r_1,\dots,r_n)\;=\;\lim_{n\to\infty} H\!\left(r_n\mid r_1,\dots,r_{n-1}\right),
\end{equation}
and the Shannon--McMillan--Breiman theorem \citep{coverthomas2006} gives it operational meaning: for an ergodic
process $-\tfrac1n\log p(r_1,\dots,r_n)\to h$ almost surely, so $h$ is the asymptotic rate at which the probability
of the \emph{realised} path decays -- the typical ``surprise per period.'' An i.i.d.\ fair signal attains the
maximal rate; predictable structure lowers it.

\paragraph{One entropy across the hierarchy.}
The same quantity unifies the kinds of unpredictability catalogued in the companion essay. \emph{Dynamical:} the
Kolmogorov--Sinai entropy $h_{\mathrm{KS}}(T)$ measures the unpredictability of a deterministic map; for chaotic
systems (positive Lyapunov exponents) it is positive, and Pesin's formula \citep{pesin1977} equates it to the sum
of those exponents, $h_{\mathrm{KS}}=\int\!\sum_{\lambda_i>0}\lambda_i\dd\mu$ -- so the doubling map's exponent
$\log 2$ \emph{is} its entropy rate. \emph{Algorithmic:} Brudno's theorem \citep{brudno1983} shows that for an
ergodic system the Kolmogorov-complexity rate of almost every orbit equals $h_{\mathrm{KS}}$, so the statistical
(Shannon), algorithmic (Kolmogorov), and dynamical (KS) measures of unpredictability coincide. \emph{Thermodynamic
and quantum:} the epistemic ignorance of a many-body system is its Gibbs entropy $-k_B\sum_i p_i\log p_i$, and the
ontic indeterminism of a measurement on a mixed state is the von Neumann entropy $-\Tr(\rho\log\rho)$. Entropy is the
common currency of unpredictability -- classical and quantum, deterministic and stochastic.

% ======================================================================
\section{Relative entropy: pricing the \texorpdfstring{$\Prob$--$\Q$}{P--Q} wedge}
\label{sec:relent}

If entropy measures unpredictability, \emph{relative} entropy measures the distance between probability
descriptions, and it is the exact instrument for the $\Prob$--$\Q$ wedge of \S\ref{sec:pq}.

\begin{definition}[Relative entropy; \citealp{kullbackleibler1951,coverthomas2006}]
For $\Q\ll\Prob$ the \emph{relative entropy} (Kullback--Leibler divergence) is
\begin{equation}\label{eq:reldef}
D(\Q\,\|\,\Prob)\;=\;\E^{\Q}\!\left[\log\frac{\dd\Q}{\dd\Prob}\right]\;\ge\;0,
\end{equation}
with equality if and only if $\Q=\Prob$: the informational cost of using $\Q$ in place of the true $\Prob$.
\end{definition}

The qualitative ``wedge $=$ price of risk'' of \S\ref{sec:pq} now becomes an identity.

\begin{proposition}[The entropy of the wedge]\label{prop:wedgeentropy}
In the lognormal market \eqref{eq:gbmP}--\eqref{eq:gbmQ}, the risk-neutral measure $\Q$ satisfies
\begin{equation}\label{eq:wedgeentropy}
D(\Q\,\|\,\Prob)\;=\;\tfrac12\,\theta^2\,T\;=\;\tfrac12\left(\frac{\mu-r}{\sigma}\right)^{2}T,
\end{equation}
one half the squared Sharpe ratio times the horizon.
\end{proposition}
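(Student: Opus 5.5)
The computation starts from the explicit Girsanov density \eqref{eq:girsanov}. By definition \eqref{eq:reldef},
\[
D(\Q\,\|\,\Prob)=\E^{\Q}\!\left[\log\frac{\dd\Q}{\dd\Prob}\Big|_{\F_T}\right]
=\E^{\Q}\!\left[-\theta W_T^{\Prob}-\tfrac12\theta^2T\right].
\]
So the whole proof reduces to computing the $\Q$-expectation of the Brownian endpoint $W_T^{\Prob}$. This needs care, since $W^{\Prob}$ is a Brownian motion under $\Prob$, not under $\Q$. I take the filtration to be the one generated by $W^{\Prob}$ (or any filtration on which \eqref{eq:girsanov} is the density), so that the divergence is that of the laws on $\F_T$.

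The key step uses the Girsanov relation stated after \eqref{eq:girsanov}: $W_t^{\Q}=W_t^{\Prob}+\theta t$ is a $\Q$-Brownian motion. Substituting $W_T^{\Prob}=W_T^{\Q}-\theta T$ into the log-density gives
\[
\log\frac{\dd\Q}{\dd\Prob}=-\theta W_T^{\Q}+\theta^2T-\tfrac12\theta^2T=-\theta W_T^{\Q}+\tfrac12\theta^2 T.
\]
Since $\E^{\Q}[W_T^{\Q}]=0$, this yields $D(\Q\,\|\,\Prob)=\tfrac12\theta^2T$. Inserting $\theta=(\mu-r)/\sigma$ gives the second form of \eqref{eq:wedgeentropy}.

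The one point to justify is integrability, so that the expectation is well defined and the linearity step is legitimate. Here $\theta$ is constant, so the log-density is an affine function of a Gaussian variable under $\Q$. It is therefore in $L^1(\Q)$, and Novikov's condition holds trivially, so \eqref{eq:girsanov} is a genuine probability density and $\Q\sim\Prob$. For a variable market price of risk $\theta_t$, the same argument would give $\tfrac12\E^{\Q}\!\int_0^T\theta_t^2\dd t$, with the stochastic integral a true $\Q$-martingale under a Novikov-type assumption. I expect this to be the only delicate part, and it does not arise in the constant-coefficient model.

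As a cross-check, I would compute the divergence as an expectation of the $\Prob$-density $Z_T$. Under $\Prob$, $W_T^{\Prob}\sim N(0,T)$. Then
\[
D(\Q\,\|\,\Prob)=\E^{\Prob}[Z_T\log Z_T].
\]
The Gaussian moment generating function gives $\E^{\Prob}[W_T^{\Prob}e^{-\theta W_T^{\Prob}}]=-\theta T\,e^{\theta^2T/2}$. Hence the expectation equals $\theta^2T-\tfrac12\theta^2T=\tfrac12\theta^2T$, which confirms the result.

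Equivalently, $\Prob$ and $\Q$ restricted to the path $W^{\Prob}$ are the laws of Brownian motions with drifts $0$ and $-\theta$. The answer is then the standard Gaussian shift divergence $\tfrac12\theta^2T$. This matches the signal-to-noise scaling \eqref{eq:drift-snr}.
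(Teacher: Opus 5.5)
Your proof is correct and follows the paper's argument: write $\log\frac{\dd\Q}{\dd\Prob}$ from \eqref{eq:girsanov}, substitute $W_T^{\Prob}=W_T^{\Q}-\theta T$, and take the $\Q$-expectation to get $\theta^2T-\tfrac12\theta^2T=\tfrac12\theta^2T$. Your integrability remark and the $\E^{\Prob}[Z_T\log Z_T]$ cross-check are both sound, and they are extras the paper omits.
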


\begin{proof}
From \eqref{eq:girsanov}, $\log\frac{\dd\Q}{\dd\Prob}=-\theta W_T^{\Prob}-\tfrac12\theta^2T$. Under $\Q$ the process
$W_t^{\Q}=W_t^{\Prob}+\theta t$ is a Brownian motion, so $W_T^{\Prob}=W_T^{\Q}-\theta T$ and
$\E^{\Q}[W_T^{\Prob}]=-\theta T$. Therefore
$D(\Q\|\Prob)=\E^{\Q}\!\left[-\theta W_T^{\Prob}-\tfrac12\theta^2T\right]=\theta^2T-\tfrac12\theta^2T=\tfrac12\theta^2T$.
\end{proof}

The informational distance of the pricing measure from reality is thus governed by the market price of risk: the
larger the premium an asset commands, the further $\Q$ must sit from $\Prob$, and the more structure pricing must
strip away. The identity also explains a notorious difficulty. By Pinsker's inequality,
$\norm{\Q-\Prob}_{\mathrm{TV}}\le\sqrt{\tfrac12 D(\Q\|\Prob)}=\tfrac12\,\abs{\theta}\sqrt{T}$, so $\Prob$ and $\Q$ --
and hence the drift that distinguishes them -- become statistically distinguishable only over horizons of order
$\theta^{-2}$. With Sharpe ratios well below one, that is decades of data: the equity premium is far harder to
estimate than volatility, exactly the point of \citet{merton1980}. The wedge is real but informationally faint.

\paragraph{A principled choice of pricing measure.}
In an incomplete market Theorem~\ref{thm:ftap} leaves a whole convex set $\mathcal M$ of equivalent martingale
measures, and \S\ref{sec:pq} noted that practitioners select among them ``by fiat.'' Relative entropy supplies a
canonical selection: the \emph{minimal entropy martingale measure} \citep{frittelli2000}
\begin{equation}\label{eq:memm}
\Q^{*}\;=\;\arg\min_{\Q\in\mathcal M}\;D(\Q\,\|\,\Prob),
\end{equation}
the martingale measure that distorts reality least, dual to maximising expected exponential utility. Of all the
instrumentally admissible probabilities, entropy selects the one of minimal informational violence.

\paragraph{Risk and ambiguity as entropy penalties.}
The same divergence quantifies attitudes to risk and to model uncertainty, closing the loop with \S\ref{sec:knight}.
The \emph{entropic risk measure} admits the dual representation \citep{follmerschied2016}
\begin{equation}\label{eq:entropicrisk}
\rho_\gamma(X)\;=\;\frac1\gamma\log\E^{\Prob}\!\left[e^{-\gamma X}\right]
\;=\;\sup_{\Q\ll\Prob}\Bigl\{\E^{\Q}[-X]-\tfrac1\gamma D(\Q\,\|\,\Prob)\Bigr\},
\end{equation}
a worst-case expected loss in which each alternative model $\Q$ is penalised by its relative entropy to the reference
$\Prob$, the risk aversion $\gamma$ trading pessimism against plausibility. The robust-control / multiplier
preferences of \citet{hansensargent2008} use the identical penalty,
$\min_{\Q}\{\E^{\Q}[\text{loss}]+\vartheta\,D(\Q\|\Prob)\}$, so the Knightian set of priors $\mathcal P$ of
\S\ref{sec:knight} acquires a quantitative size -- a relative-entropy ball around the reference model. And when only
partial information is available -- a few moments, or a strip of option prices -- the least-prejudiced model is the
one of \emph{maximum entropy} (equivalently, minimum relative entropy to a prior) consistent with the constraints
\citep{jaynes1957}: the principled response to the high-dimensional ignorance of \S\ref{sec:setup}, which in
derivatives recovers the risk-neutral density from quoted prices \citep{buchenkelly1996,stutzer1996}.

% ======================================================================

\section{An empirical protocol: how the thesis can be falsified}
\label{sec:empirical-protocol}

A theory of market hardness should not end as philosophy. It should imply tests. The relevant empirical question is
not whether one can find an in-sample pattern -- one always can in a high-dimensional search space -- but whether a
signal survives the sequence
\begin{equation}\label{eq:empirical-pipeline}
\begin{aligned}
\text{prediction}
&\longrightarrow \text{risk adjustment}
\longrightarrow \text{costs and impact}\\
&\longrightarrow \text{capacity}
\longrightarrow \text{out-of-sample survival}
\longrightarrow \text{live reflexive decay}.
\end{aligned}
\end{equation}
The essay's claim would be weakened by a public, scalable, persistent, riskless, net-of-cost alpha. It is not
weakened by a small premium, a volatility forecast, a private-information rent, a capacity-limited effect, or a
pattern that disappears after costs and deployment.

For a return forecast $\widehat m_t$ and realised excess return $r_{t+1}$, a minimal protocol separates three
objects:
\begin{align}
\text{raw predictive content:}\qquad
& R^2_{\mathrm{OS}}
=1-\frac{\sum_t (r_{t+1}-\widehat m_t)^2}{\sum_t(r_{t+1}-\bar r_{t}^{\mathrm{hist}})^2},
\label{eq:ros}\\
\text{economic value:}\qquad
& \Delta U
=\frac{1}{T}\sum_t u\!\bigl(W_{t+1}(\widehat m_t)\bigr)
-\frac{1}{T}\sum_t u\!\bigl(W_{t+1}(\text{benchmark})\bigr),
\label{eq:utilitygain}\\
\text{net tradability:}\qquad
& \widehat{\alpha}^{\mathrm{net}}
=\widehat{\alpha}^{\mathrm{gross}}-\text{spread}-\text{fees}-\text{borrow}-\text{impact}-\text{financing}.
\label{eq:nettradability}
\end{align}
A signal with positive $R^2_{\mathrm{OS}}$ but negative \eqref{eq:nettradability} confirms, rather than refutes, the
thesis: prediction exists, but exploitability is competed into costs, risk, impact, or capacity.

The protocol must also correct for model search. In modern finance, a researcher rarely tests one strategy; she tests
hundreds of transformations, horizons, assets, filters, labels, and hyperparameters. The relevant null is therefore
not the performance of one selected backtest, but the performance of the best selected backtest after data snooping.
White's Reality Check and Hansen's superior predictive ability test are classical corrections for this problem
\citep{white2000,hansen2005}; more recent financial machine-learning practice adds purged cross-validation,
embargoes, walk-forward evaluation, and deflated performance statistics to prevent leakage and selection bias
\citep{lopezdeprado2018}.

The falsification standard is deliberately severe. A pattern counts as a challenge only if it is public,
implementable, risk-adjusted, statistically robust after multiple-testing correction, capacity-aware, and persistent
after the capital that follows it changes the market. To make this concrete: a claim that the thesis is false would
require a strategy that, out-of-sample, achieves a large positive net Sharpe ratio after all costs, impact, funding,
and borrow constraints, across genuinely independent evaluation periods, with significance assessed after correction
for the number of strategies tried.

The sample-size arithmetic explains why the bar is high. Under the optimistic iid approximation, testing a true net
annual Sharpe $\SR$ against zero at two-sided level $\alpha$ with power $1-\beta$ requires roughly
\begin{equation}\label{eq:power-sharpe}
Y\gtrsim
\left(\frac{z_{1-\alpha/2}+z_{1-\beta}}{\SR}\right)^2
\end{equation}
independent years. For $\SR=0.8$, $\alpha=10^{-6}$, and $80\%$ power, this gives about $51$ years, not a dozen. A
12-year requirement corresponds roughly to the much weaker $5\%$ test with the same Sharpe. Serial dependence,
non-normal tails, overlapping portfolios, hyperparameter search, and live capital feedback all make the true burden
larger; deflated-Sharpe and multiple-testing corrections target precisely these distortions \citep{baileylopezdeprado2014,harveyliuzhu2016}. This high bar is exactly the bar implied by a reflexive market: anything weaker is not evidence of ontic
randomness or its absence; it is only evidence about a particular information set and trading technology.

% ======================================================================

\section{Synthesis and consistency with the empirical record}
\label{sec:synthesis}

\paragraph{A taxonomy of market hardness.}
The sources, the filtration/cost qualifications, and the return decomposition combine into the following picture,
which places markets entirely in the \emph{epistemic} column of the companion essay while adding mechanisms absent
from passive physical systems.

\begin{center}
\renewcommand{\arraystretch}{1.35}
\begin{tabular}{@{}p{0.215\textwidth}p{0.32\textwidth}p{0.385\textwidth}@{}}
\toprule
\textbf{Source} & \textbf{Nature} & \textbf{Formal locus}\\
\midrule
High-dimensional partial observation &
Classical ignorance of many coupled, unobserved variables; shared with statistical physics &
Probabilistic modelling of an aggregate (\S\ref{sec:setup}); the modeller's $\Prob$.\\
Finite-sample learnability &
A signal can exist but be too faint, unstable, crowded, or costly to estimate and deploy &
Drift horizon \eqref{eq:drift-horizon}, model-selection bias \eqref{eq:winner-bias}, capacity \eqref{eq:capacity-objective}.\\
Reflexivity &
Forecast feeds back into the forecasted system; \emph{market-specific} &
Self-defeating predictability and the equilibrium martingale (Prop.~\ref{prop:reflexive}--\ref{prop:gs};
Eq.~\eqref{eq:mart}).\\
Knightian uncertainty &
No fixed law; non-stationarity, regime change; \emph{beyond probability} &
Set of priors $\mathcal{P}$, maxmin \eqref{eq:maxmin}, price bounds \eqref{eq:bounds}.\\
\midrule
\multicolumn{3}{@{}l}{\emph{Return components} (Corollary~\ref{cor:trichotomy}):}\\
Martingale innovation & Arbitrageable; unpredictable \emph{by construction} & $\Delta M$, $\E[\Delta M\mid\F_{k-1}]=0$.\\
Risk premium & Predictable but not a free lunch & $\Delta A$, reabsorbed under $\Q$.\\
Regime/tail & Not probabilizable & Outside any fixed $\Prob$.\\
\bottomrule
\end{tabular}
\end{center}

\paragraph{The entropy ledger.}
Collected as quantities, the thesis reads as a single ledger of entropies and information constraints.

\begin{center}
\renewcommand{\arraystretch}{1.3}
\begin{tabular}{@{}p{0.45\textwidth}p{0.46\textwidth}@{}}
\toprule
\textbf{Phenomenon} & \textbf{Information quantity or constraint}\\
\midrule
Irreducible unpredictability per period & entropy rate $h$ \eqref{eq:entropyrate}\\
Forecastability of returns & mutual information $I(r_k;\F_{k-1})$ \eqref{eq:mutualinfo}\\
Efficiency (arbitrageable channel) & predictable mean component neutralised (Prop.~\ref{prop:entropic})\\
Learnability of alpha & sample-size and model-selection constraints \eqref{eq:drift-horizon}--\eqref{eq:winner-bias}\\
Volatility clustering & second-moment information -- forecastable, not free\\
The $\Prob$--$\Q$ wedge / price of risk & relative entropy $D(\Q\|\Prob)=\tfrac12\theta^2T$ (Prop.~\ref{prop:wedgeentropy})\\
Choice of pricing measure & minimal entropy martingale measure \eqref{eq:memm}\\
Risk aversion / Knightian ambiguity & entropy-penalised worst case \eqref{eq:entropicrisk}\\
Modelling under ignorance & maximum entropy / minimum relative entropy\\
\bottomrule
\end{tabular}
\end{center}

\begin{table}[h]
\centering
\caption{Illustrative one-year calibration. The numbers are deliberately rough: equity premium $\mu-r=5\%$ and volatility $\sigma=15\%$.}
\label{tab:calibration}
\begin{tabular}{@{}ll@{}}
\toprule
Quantity & Value \\
\midrule
Market price of risk & $\theta=(\mu-r)/\sigma\approx 0.33$ \\
Relative entropy wedge & $D(\Q\|\Prob)=\frac12\theta^2T\approx 0.055$ nats \\
Same wedge in bits & $0.055/\log 2\approx 0.079$ bits \\
Pinsker upper bound & $\|\Q-\Prob\|_{\mathrm{TV}}\le\frac12|\theta|\sqrt{T}\approx 0.165$ \\
\bottomrule
\end{tabular}
\end{table}

\paragraph{Consistency with the empirical record.}
We do not re-estimate these facts here; the point is that the decomposition is consistent with well-established
empirical findings. On the \emph{unpredictable} side, returns
display weak or economically negligible serial correlation at horizons relevant to trading, the empirical core of the
efficient-market hypothesis \citep{fama1970}, with the random-walk idea tracing to \citet{bachelier1900}. On the
\emph{predictable but risky} side, conditional variance is strongly forecastable -- volatility clustering, captured
by the ARCH/GARCH family \citep{engle1982,bollerslev1986}; average excess returns carry cross-sectional premia
associated with market, size, and value exposures \citep{sharpe1964,famafrench1993}; and valuation ratios forecast
long-horizon returns \citep{campbellshiller1988}. Each is a forecastable $\Delta A$ that compensates risk, not a
violation of the martingale \eqref{eq:mart} under $\Q$. On the \emph{structured but non-Gaussian} side, prices exhibit
heavy tails incompatible with the normal law \citep{mandelbrot1963}, excess volatility relative to simple dividend
models \citep{shiller1981}, and liquidity that depends on the presence of noise traders \citep{black1986}. None of
this is the fingerprint of ontic chance; all of it is the fingerprint of structured ignorance, endogenous feedback,
limited learnability, and a shifting law.

% ======================================================================

\section{Conclusion}
\label{sec:conclusion}

The thesis, made precise, is as follows. Financial markets are \emph{epistemically hard}, not \emph{ontically
random}. The Born-rule notion of chance has no direct foothold in a price formed by causal economic interaction.
Probability enters finance for a different reason: it is the calculus of incomplete information, strategic
interaction, risk compensation, finite-sample learning, and model uncertainty.

The first formal lesson is that probability in finance is \emph{instrumental}. The field's central computation runs
under a risk-neutral measure $\Q\ne\Prob$ that is selected because it prices assets conveniently, not because anyone
believes it is the real-world data-generating law. In the lognormal benchmark the informational size of the wedge is
not a metaphor but a number,
\begin{equation*}
D(\Q\|\Prob)=\frac12\theta^2T,
\end{equation*}
one half the squared Sharpe ratio times the horizon. The price of risk is literally an informational distortion.

The second lesson is that no-arbitrage, efficiency, predictability, learnability, and exploitability must not be
collapsed. No-arbitrage gives a martingale representation under an equivalent pricing measure. Efficiency is an
information-set-relative empirical claim. Predictability is a property of a conditional law. Learnability is a
finite-sample and computational property of an estimator. Exploitability is a net-of-cost implementation claim.
Predictable premia, volatility clustering, microstructure order-flow effects, private information, and long-horizon
valuation signals can all coexist with a market that contains no scalable free lunch. The right distinction is not
``predictable versus random'' but \emph{predictable and compensated}, \emph{predictable but unlearnable},
\emph{predictable and inaccessible}, \emph{predictable but self-defeating}, or \emph{unforecastable given the
available filtration}.

The third lesson is causal. A market is not a passive dynamical system observed from outside. It is a strategic
system whose participants act on forecasts, and those actions change prices, liquidity, spreads, and future
forecasts. Microstructure makes this concrete: each tick arises from orders, queues, inventory, adverse selection,
and impact. Reflexivity makes it general: a public edge erodes because it is traded. Knightian uncertainty makes it
harder still: sometimes the relevant law itself shifts, so the difficulty is not estimating a fixed distribution but
surviving when no single prior is adequate.

Entropy supplies the common measurement language. Conditional entropy is residual unpredictability; mutual
information is forecastability; relative entropy is the distance between pricing and physical measures; entropy
penalties quantify robust control and ambiguity; maximum entropy models are the disciplined response to partial
constraints. The entropic view also clarifies why markets look deceptively random. The first-moment channel is small
and competed away, the volatility channel remains informative, the risk-premium channel is faint, and the tail/regime
channel is structurally unstable.

The practical moral for quantitative finance is therefore not that markets are random, nor that they are perfectly
predictable. The discipline is the \emph{management of limited predictability}: building richer filtrations, testing
signals out of sample, paying for information only when it survives costs and impact, distinguishing predictive
correlation from intervention-stable causality, respecting finite-sample learnability, and hedging against model
uncertainty when the law itself may change. Markets do not play dice. They keep secrets, react to being read, and
sometimes change the rules. That is not randomness; it is a harder and more interesting problem.

% ======================================================================
\bibliographystyle{plainnat}
\bibliography{finance_reviewed_extended_v3_refs}

\end{document}